\newtheorem{theorem}{Theorem}
\newtheorem{lemma}{Lemma}
\newtheorem{observation}{Observation}[section]
\newtheorem{corollary}{Corollary}
\theoremstyle{remark}
\journal{Journal}
\begin{document}
\begin{frontmatter}


\title{Efficient D-2-D via Leader Election: Arbitrary Initial Configuration and No Global Knowledge\tnoteref{mytitlenote}}
 
\tnotetext[mytitlenote]{A preliminary version of this article appeared in the $10^{th}$ Annual International Conference on Algorithms and Discrete Applied Mathematics (CALDAM 2024) \cite{GorainKM24}.}

\author[inst1]{Tanvir Kaur}
\author[inst2]{Barun Gorain}
\author[inst1]{Kaushik Mondal\corref{mycorrespondingauthor}}
\cortext[mycorrespondingauthor]{Corresponding author}
\ead{kaushik.mondal@iitrpr.ac.in}

\affiliation[inst1]{
            addressline={Department of Mathematics, Indian Institute of Technology Ropar}, 
            city={Rupnagar},
            postcode={140001}, 
            state={Punjab},
            country={India}}

\affiliation[inst2]{
            addressline={Department of Electrical Engineering and Computer Science, Indian Institute of Technology Bhilai}, 
            postcode={492015}, 
            state={Chattisgarh},
            country={India}}
\begin{abstract}
Distance-2-Dispersion (D-2-D) problem aims to disperse $k$ mobile agents starting from an arbitrary initial configuration on an anonymous port-labeled graph $G$ with $n$ nodes such that no two agents occupy adjacent nodes in the final configuration, though multiple agents may occupy a single node if there is no other empty node whose all adjacent nodes are also empty. In the existing literature, this problem is solved starting from a rooted configuration for $k$  $(\geq 1)$ agents in $O(m\Delta)$ synchronous rounds with a total of $O(\log n)$ memory per agent, where $m$ is the number of edges and $\Delta$ is the maximum degree of the graph.  In this work we study the D-2-D problem using $n$ mobile agents starting from an arbitrary initial configuration. Solving D-2-D with $n$ agents is equivalent to finding a maximal independent set of the graph as size of any maximal independent set must be less than $n$. We solve this problem and terminate in $O(max\{n\log^2 n, m\})$ rounds using $O(\log n)$ memory per agent. The agents do not have any prior knowledge of any graph parameters. During the run of our algorithm, we also solve the leader election problem that elects an agent as a leader in $O(max\{n\log^2 n, m\})$ rounds with $O(\log n)$ bits of memory at each agent without requiring any prior global knowledge.
\end{abstract}

\begin{keyword}
Mobile agents \sep Collaborative dispersion \sep Deterministic algorithm \sep Distance-2-Dispersion \sep Distributed algorithm
\vspace{-0.15cm}
\end{keyword}

\end{frontmatter}
\section{Introduction}
The problem of dispersion is one of the fundamental problems in distributed computing that aims to position a set of $k~(\leq n)$ mobile agents on the nodes of an anonymous port-labeled graph with $n$ nodes and $m$ edges such that in the final configuration each node contains at most one agent settled on it. The agents positioned on the nodes of the graph store information required by the agents to complete the dispersion of the graph. Thus, a question that naturally arises in this context is: ``What if an additional constraint is imposed such as no two adjacent nodes can be occupied by agents?" This gave rise to the problem of Distance-2-Dispersion (D-2-D) recently \cite{kaur2023}. The authors in \cite{kaur2023} consider a constraint on the dispersion problem in the following form: no two adjacent nodes contain agents in the final configuration. If we begin with $k\geq 1$ agents in the initial configuration, some agents can run out of place while satisfying D-2-D conditions. Thus, the authors allowed the following: an unsettled agent can settle at a node that already contains a settled agent, only if the unsettled agent has no other node to settle at, maintaining the added constraint. Since the agents are not settled at all the nodes of the graph, the agents have limited memory, and the nodes are memory-less, the problem becomes challenging and interesting to work on.

The authors in \cite{kaur2023} solve the problem of D-2-D from a rooted initial configuration. They provide an algorithm that requires $O(m\Delta)$ rounds to solve D-2-D on a graph $G$ consisting of $n$ nodes, $m$ edges and maximum degree $\Delta$. They also provide an $\Omega(m\Delta)$ lower bound on the number of rounds to solve D-2-D given the agents have no more than $O(\log n)$ memory. Also, their algorithm terminates without using any global knowledge of any of the graph parameters.
In this paper, we provide an algorithm that solves the problem of D-2-D using $n$ agents starting from an arbitrary initial configuration in $O(\max\{n\log^2n, m\})$ time complexity with the agents having $O(\log n)$ memory. Our algorithm does not require knowledge of any global parameter and achieves termination. It is easy to observe that, solving D-2-D with $n$ agents is equivalent to finding a maximal independent set of the graph as size of any maximal independent set must be less than $n$. Besides this, the technique used in our algorithm also elects a global leader which is  of independent interest in distributed computing.

\section{Related Work}

In this section, we focus on the existing works that solve dispersion, D-2-D, find maximal independent set, or do leader election using multiple agents on arbitrary graphs.
The problem of dispersion is highly related to the problem of D-2-D. It was first introduced by Augustine et al. \cite{John2018} in the year 2018. The problem of dispersion is worked on under various model assumptions\cite{Ajayicdcn2019,AjayAlgo2019,ShintakuSSS2020,Ajayopodis2021,AAMSS18,KMS2020-ICDCN,KMS2020,tamc19,DBS21,MMM21-TCS,PSM21}. In \cite{Ajayopodis2021}, the authors solved the problem of dispersion from arbitrary initial configuration that runs in $O(min\{m,k\Delta\})$ rounds using $\Theta(\log(k+\Delta))$ bits additional memory per agent, ignoring the $O(\log n)$ memory that agents anyhow require to store their ID. The drawback of this work is that the algorithm does not terminate. However, recently Sudo et al. \cite{Sudo} revolutionized the history of the dispersion problem by creatively solving the problem of dispersion for rooted configuration in $O(k\log min\{k,\Delta\})=O(k\log k)$ time using $O(\log\Delta)$ bits of memory per agent. Further, for the arbitrary initial configuration they achieve dispersion in $O(k(\log k)(\log min\{k,\Delta\}))=O(k\log^2k)$ time using $O(\log(k+\Delta))$ bits of memory by the agents. This is the best-known algorithm till date to solve dispersion. 

By imposing an extra constraint on the problem of dispersion, Kaur et al. \cite{kaur2023} recently introduced the problem of Distance-2-Dispersion (D-2-D). The problem states that given a set of $k\geq1$ agents, they need to reposition themselves and settle at some node satisfying the condition that no two agents can settle at adjacent nodes. Moreover, an agent can settle at a node where there is already a settled agent only if no more nodes are left satisfying the first condition. They also provide a lower bound of $\Omega(m\Delta)$ on the number of rounds required by the agents to solve the problem of D-2-D considering agents do not have more than $O(\log n)$ memory. Further, they provide an algorithm for rooted configuration on arbitrary graphs that requires $2\Delta(8m-3n+3)$ rounds using $O(\log n)$ memory per agent. Later, in \cite{GorainKM24} they show the power of using more agents. They assume the number of agents $k=n+1$, where $n$ is the number of nodes in the graph, and improve over time complexity for rooted configuration. They provide an algorithm for the rooted initial configuration on arbitrary graphs in $O(m)$ rounds using $O(\log n)$ memory per agent. For arbitrary initial configuration, their algorithm requires $O(pm)$ rounds, where $p$ is the number of nodes containing agents in the initial configuration. The presence of $k>n$ agents also ensures the formation of a maximal independent set of the graph. In \cite{PattanayakBCM24}, the authors compute a maximal independent set of the graph using mobile agents that are arbitrarily positioned on the graph in $O(n\Delta\log n)$ time and $O(\log n)$ memory with each agent. However, to run their algorithm, the agents require the knowledge of $n$ and $\Delta$. However, in a recent work \cite{leader24}, the authors compute a maximal independent set of a graph using $n$ mobile agents in $O(n\Delta)$ time, and each agent requires $O(n\log n)$ memory to run the algorithm without any prior knowledge of any global parameter. Further, in \cite{leader24}, the authors solve the problem of leader election using $n$ mobile agents. They propose a deterministic algorithm to elect an agent as the leader in $O(m)$ rounds and $O(n\log n)$ bits at each agent. The agents do not require any prior knowledge of any of the global parameters.

Another work that computes a maximal independent set of the graph using mobile agents is \cite{Pramanick23}. Pramanick et al. proposed an algorithm to find a maximal independent set using myopic luminous agents \cite{Pramanick23} of an arbitrary connected graph under the asynchronous scheduler. However, according to their model, the agents have prior knowledge of $\Delta$ and agents have at least $3$ hops visibility. Agents use colors to represent different states as a medium of communication. 
The problem of scattering or uniform distribution is also related to the D-2-D problem. The scattering problem is studied on grids \cite{BarriereFBS09} and on rings \cite{ElorB11,ShibataMOKM16}. Both problems are studied with anonymous agents.

In this work, we investigate the problem of Distance-2-Dispersion (D-2-D) by a group of $n$ agents that are arbitrarily positioned at the nodes of a graph. Each agent has $O(\log n)$ memory and does not have any global knowledge. By the end, the agents achieve D-2-D configuration by placing themselves at the nodes of the graph that constitute a maximal independent set of the graph. Our algorithm runs in $O(max\{n\log^2n,m\})$ time. This is an extension and improvement over the results provided in \cite{GorainKM24}. Further, our technique helps in the election of a global leader in the graph which is an important problem vastly studied in distributed computing.

\section{The Model and the Problem}
\noindent\textbf{Model:} Let $G$ be a simple, connected, undirected graph with $n$ nodes and $m$ edges. The graph is zero-storage i.e. the nodes have no memory. Also, $G$ is anonymous i.e. the nodes of the graph have no ID. However, the edges incident to a node $v$ are locally labeled at $v$ so that an agent present at $v$ can distinguish between those edges. In other words, the graph is port-labeled which implies that each edge associated with any node $v$ has a distinct numbering from the range $[0,\delta(v)-1]$ where $\delta(v)$ is the degree of node $v$. These port numbers associated with both ends of an edge are independent of each other. For any node $v$, we define $N(v, i)$ as the node $u\in N(v)$ such that the port associated with the edge $(v,u)$ at $v$ is $i$. 

We assume that $n$ agents are present arbitrarily at the nodes of the graph. The set of all agents is denoted by $A$. An agent is always present at any node of the graph and is never located at any edge at any time step. Each agent has a unique identifier in the range $[1, n^c]$, for some constant $c$, and has $O(\log n)$ memory. We consider the face-to-face communication model, where the agents present at the same node can communicate with each other. An agent can see the port numbers of the respective edges associated with a node where it is currently present. It also knows the port number used to enter into the current node. The agents have no prior knowledge of the network $G$.   

We assume synchronous rounds where, in each round, an agent communicates (with co-located agents), computes, and moves (or does not move). In one round, an agent can traverse only one edge from its current position during the move. The number of synchronous rounds required from the start till the termination of all the agents is the time complexity of the algorithm. The memory requirement per agent to run the algorithm is the memory complexity of the algorithm.
\\
\\
\noindent{\textbf{The problem (D-2-D with $n$ agents):}} 
Given a set of $n$ agents placed arbitrarily in a port-labeled graph $G$ with $n$ nodes and $m$ edges, the agents need to achieve a configuration by the end of the algorithm where each agent settles at some node satisfying the following two conditions: (i) no two adjacent nodes can be occupied by settled agents, and (ii) an agent can settle in a node where there is already a settled agent only if no other empty node is available satisfying condition (i).

\noindent Note that, solving D-2-D with $n$ agents is equivalent to finding a maximal independent set of the graph as size of any maximal independent set must be less than $n$.

\section{Our contribution:}
We obtain the following results and improvements over the existing works which are also presented in \noindent Table \ref{table:summary}.
\begin{itemize}
    \item We present an algorithm that solves our problem from an arbitrary initial configuration using $n$ mobile agents in $O(\max\{n\log^2 n, m\})$ rounds, provided each agent is equipped with $O(\log n)$ memory and has no prior knowledge of any of the global parameters. Solving D-2-D using $n$ agents is equivalent to finding a maximal independent set of the graph. 
    \item We also solve the problem of leader election, i.e., electing an agent as the leader of those $n$ agents, in $O(\max\{n\log^2 n, m\})$ rounds with each agent having $O(\log n)$ bits of memory. The agents do not require prior information of any global parameter to run our algorithm. 
    \item We reduce the number of agents from $n+1$ to $n$ and improve the time complexity from $O(pm)$ to $O(\max\{n\log^2 n, m\})$ compared to the preliminary version \cite{GorainKM24}, where $p$ denotes the number of nodes having at least one agent in the initial configuration.
    \item Our results improve over \cite{PattanayakBCM24} with respect to the fact that our algorithm does not require any knowledge of any global parameter. Moreover, our time complexity is better than \cite{PattanayakBCM24} whenever $n\log^2 n\in o(m)$. Note that, in this case, our complexity is in $O(m)$ and $m\in o(n\Delta\log n)$ always holds in any graph. Otherwise, if $m\in o(n\log^2 n)$, then our time complexity is better in all such graphs where $\log n\in o(\Delta)$.
    \item We solve leader election using $O(\log n)$ memory per agent which is a major improvement compared to the $O(n\log n)$ memory per agent as in \cite{leader24}. We match their time complexity whenever $n\log^2 n\in o(m)$. Our algorithm as well as the algorithm in \cite{leader24} does not require any prior global knowledge.
\end{itemize}

\noindent The general idea of achieving D-2-D on a graph is to first achieve dispersion on the graph followed by the conversion of this dispersed configuration into a D-2-D configuration. However, in order to execute this idea, the agents must understand that dispersion is achieved on the graph so that it can begin the further part of the algorithm. However, the existing best-known dispersion algorithm \cite{Sudo} does not achieve termination without the knowledge of any global parameters, though it runs in $O(n\log^2n)$ runtime. This is the main reason why \cite{PattanayakBCM24} requires global knowledge or \cite{GorainKM24} requires $n+1$ agents. Here in this work, we use this negative fact as a positive entity. We say that if an agent $a_i$ sees a neighboring node that does not have a settled agent, $a_i$ understands that the dispersion has not been achieved yet. Theoretically, this implies our runtime is still within $O(n\log^2 n)$ only. Hence $a_i$ can safely wait until it sees a settled agent in the neighboring node it is looking for before proceeding further.

\begin{table}[ht!]
\begin{tabular}{|l|l|l|l|l|}
\hline
\textbf{Output} & \textbf{\begin{tabular}[c]{@{}l@{}}No. of \\ agents\end{tabular}} & \textbf{\begin{tabular}[c]{@{}l@{}}Knowledge\\ of Global\\ Parameters\end{tabular}} & \textbf{Time} & \textbf{Memory} \\ \hline
\begin{tabular}[c]{@{}l@{}}MIS (Our work)\\ (D-2-D conf.)\end{tabular} & $n$ & No & $O(\max\{n\log^2 n, m\})$ & $O(\log n)$ \\ \hline
MIS \cite{PattanayakBCM24} & $n$ & $n$, $\Delta$ & $O(n\Delta\log n)$ & $O(\log n)$ \\ \hline
MIS \cite{leader24} & $n$ & No & $O(n\Delta)$ & $O(n\log n)$ \\ \hline
\begin{tabular}[c]{@{}l@{}}MIS \cite{GorainKM24}\\ (D-2-D conf.) \end{tabular} & $n+1$ & No & $O(pm)$ & $O(\log n)$ \\ \hline
\begin{tabular}[c]{@{}l@{}}Leader Election \\ (Our Work)\end{tabular} & $n$ & No & $O(\max\{n\log^2 n, m\})$ & $O(\log n)$ \\ \hline
\begin{tabular}[c]{@{}l@{}}Leader Election \\ \cite{leader24}\end{tabular} & $n$ & No & $O(m)$ & $O(n\log n)$ \\ \hline

\end{tabular}
\caption{Summary of results from existing works and our work. In all the above-mentioned results, the authors consider the agents arbitrarily positioned at the nodes of a graph $G$. Here $n$ is the number of nodes in $G$, $m$ is the number of edges in $G$, $\Delta$ is the maximum degree of a node in $G$, conf. denotes configuration and $p$ is the number of nodes that contain at least one agent in the initial configuration. }\label{table:summary}
\end{table}

\section{The Algorithm}
In this section, we propose an algorithm for achieving D-2-D. Before giving the details of the algorithm, we start by describing the challenges of solving D-2-D using the existing techniques used to solve dispersion. Next, we give a high-level idea of our proposed algorithm that also explains how to overcome the challenges to achieve D-2-D.  

\subsection{Challenges and High-Level Idea}
The intuitive solution to the D-2-D problem is to first achieve dispersion of the graph and then convert the dispersed configuration into the desired D-2-D one. However, the known algorithms to solve the dispersion of the graph do not achieve termination. The algorithms that achieve termination require prior knowledge of global parameters. Thus, it becomes a challenge to efficiently begin the conversion phase that achieves D-2-D from the dispersed phase. The algorithm \cite{GorainKM24} achieves D-2-D with termination but requires high time complexity and uses $n+1$ agents. The algorithm in \cite{PattanayakBCM24} requires knowledge of several global parameters.
The question is whether we can efficiently achieve D-2-D without using any global knowledge and achieve termination as well along with improving the existing literature. We answer this question affirmatively by providing such an algorithm. 

The idea is to begin with the dispersion algorithm as provided in \cite{Sudo} and merge the multiple DFS trees that may arise due to the dispersion algorithm into a single tree. Now, one can not wait for the dispersion to end before starting the merging as the dispersion algorithm is non-terminating. In our algorithm, agents (some designated agents) look for settled agents in their neighborhood and proceed with the merging in parallel when the dispersion is still running. In case some designated agent doesn't see a settled agent in its neighborhood node (whose presence is must for the merging), it can safely wait as it understands that dispersion is not achieved yet and hence the waiting time remains within the runtime of phase 1. We ensure that by the time merging is done on the whole graph, exactly one agent understands that, and hence we can see this as the leader as well. Now this leader agent initiates the transformation of dispersion configuration to a D-2-D configuration using our subroutine of \cite{GorainKM24} over the tree of $n$ nodes that is formed due to the merging.

\subsection{Overview of our Algorithm}
Recently Sudo et al. introduced a novel approach for performing dispersion on a graph, applicable to both rooted and arbitrary initial configurations, as documented in their work \cite{Sudo}. Leveraging settled agents, they execute Depth-First Search (DFS) on the graph, reducing the time for exploration of all the neighbors of a node from $O(\tau)$ to $O(\log\tau)$, where $\tau = \min\{k, \Delta\}$. Here, $k$ denotes the number of agents in the graph. The time complexity, as per their findings, for dispersion from a rooted configuration is $O(k\log k)$, while for an arbitrary configuration, it is $O(k\log^2 k)$. In our work, we employ their algorithm for dispersion from arbitrary initial configurations, integrating it as a subroutine into our approach. We achieve Distance-2-Dispersion (D-2-D) on a graph from any arbitrary initial configuration by using a couple of other subroutines in parallel that help achieve D-2-D from the dispersed configuration. Our objective is to achieve dispersion on the graph and simultaneously merge the different DFS trees that may exist during dispersion. The dispersion algorithm of \cite{Sudo} is called phase $1$ of our algorithm. The algorithm that is run in parallel to phase $1$ for merging the multiple trees formed by the algorithm of \cite{Sudo} into one tree, is the phase $2$ of our algorithm. Finally, the conversion of the dispersed configuration into the D-2-D configuration is executed in phase $3$ of our algorithm. Our algorithm proceeds in iterations, where each iteration consists of $24$ rounds. In particular, if the current round is round $t$ and $t=24\cdot i +r$, then $t$ is the $r^{th}$ round of $(i+1)^{th}$ iteration. Table \ref{table:slot_func} describes the rounds dedicated for each phase in a specific iteration $i$ along with the activities that the agents do during each round.

Agents run phases $1$ and $2$ in parallel since the algorithm used to achieve dispersion in phase $1$ does not achieve termination. Phase $3$ runs only after phases $1$ and $2$ get completed.
Let the agents be initially positioned at $p$ distinct nodes in the graph. If the number of agents at a node in the initial configuration is at least 2, then they are said to be in a group. Otherwise, they are said to be alone at a node. The agent with the smallest ID among its respective group settles at this current node and thus, this node is said to be the root node for its respective DFS traversal. These agents are also said to be the $leading$ agents. All the agents maintain a label that is the same as the ID of the largest agent in their group. In other words, each group now has a unique label. The $leading$ agents are responsible for the execution of phase $2$ of the algorithm in parallel to phase $1$. A $leading$ agent, say $r_l$ moves from a node $u$ to a node $v$ only if an agent $r$ is settled at node $v$ during phase $1$. The agent $r_l$ traverses through the nodes one by one if there are settled agents at those nodes; otherwise, $r_l$ waits at its current node. The $leading$ agent checks the label of the settled agent $r$. If the settled agent's label is larger than the label of the $leading$ agent $r_l$, then $r_l$ stops at the current node itself. However, if the label of the settled agent $r$ is smaller than that of the $leading$ agent, then $r$ changes its label to the same as that of the $leading$ agent's label. The settled agent $r$ thus becomes a part of the $r_l's$ DFS tree. In this way, the $leading$ agent either stops or continues its single DFS tree construction for phase $2$. By the end of this phase, we ensure that a single DFS tree is constructed by the $leading$ agent $r_{l_{max}}$ that has the largest value of label. This is because $r_{l_{max}}$ does not get stuck at any node since its label is the maximum. Note that if $r_{l_{max}}$ finds an empty node, it understands that phase $1$ is still running and waits till that empty node is occupied by some agent. Note that, the waiting time is within the runtime of phase $1$ which is $O(n\log^2 n)$. Phase $2$ may need at most $O(m)$ time after phase $1$ completes. 

Since the agent settled at the root node is out for conducting phase $2$ of the algorithm, and the phases $1$ and $2$ are run in parallel, unsettled agents may consider such root node to be vacant. Thus, its first child (the agent that settled just after the agent settled at the root node in phase $1$) hops to and fro from its original position and through its $parent$ pointer. Moreover, the entire information stored in the variables of the agent settled at the root node is transferred to the dummy variables of its first child. In other words, when the agent settled at the root node executes phase $2$ of the algorithm, its position is virtually occupied by its first child, rescuing the other DFSs that are running phase $1$ from misinterpreting its position as a vacant node\footnote{If a group consists of a single agent at the start, we take care of this case slightly differently.}. 

Phase $3$ runs according to the \cite{GorainKM24}, where the idea is that the agent $r_{l_{max}}$ moves over the tree edges of the DFS tree constructed in the phase $2$ one by one, where the agents settled at the nodes finally decide whether to stay at their position or vacate it, thereby achieving D-2-D in the graph. For the sake of completeness, now we provide an overview of the dispersion algorithm of \cite{Sudo} and the algorithm to convert dispersed configuration to D-2-D configuration of \cite{GorainKM24}.\\
\\
\noindent{\textbf {Dispersion algorithm \cite{Sudo}}:} Let $G$ be a graph where the agents are arbitrarily placed at the nodes. The agents located at the same starting node are viewed as a single group. The agents $a$ with $a.settled=\top$ are termed as the settlers, and the other agents are called explorers. Explorers are classified into two classes: leaders and zombies. An explorer $a$ is called a leader if $a.leader=a.ID$, else it is called a zombie. All the agents are leaders at the start of the execution of the algorithm. In other words, initially all the agents $a$ have $a.leader=a.ID$. A leader may convert itself into a zombie, which eventually becomes a settler, while a zombie can never become a leader, and a settler can never change itself into a zombie or a leader. Another variable, $lvl$ is used to bound the execution time of the dispersion. The variable $lvl(\in \mathbb{N})$ is set to $1$ initially for each agent. The pair $(a.leader, a.lvl)$ serves as the group identifier for phase $1$ of the algorithm. A relationship $\prec$ between any two non-zombies $r_u$ and $r_v$ uses the defined group identifiers in phase $1$ as follows:
 $r_u \prec r_v \iff (r_u.lvl< r_v.lvl)\vee(r_u.lvl=r_v.lvl \wedge r_u.leader<r_v.leader)$. An agent $a$ is said to be weaker than $b$ if $a \prec b$, else $a$ is said to be stronger than $b$. The pair $(a.leader, a.lvl)$ serves as the group identifier for phase $1$. When a leader $a$ decides to settle an agent at a node, it settles the zombie with the minimum ID, say $z_u$, and $z_u$ sets $(z_u.leader, z_u.lvl)=(a.leader, a.lvl)$. We denote this using $\psi$. In particular, let $z_u$ settle at the node $u$. Then $\psi(u)=z_u$. For any leader $a$, we define the territory of $a$ as all the nodes $w\in V$ such that the agent $r_w$ settled at the node $w$ has $r_w.leader=a.ID$ and $r_w.lvl=a.lvl$.
The novel idea they use that reduces runtime compared to existing literature is called probing, which works as follows. The idea is to find an unvisited neighbor (if any) of the current location with the help of settled agents in the neighborhood of the current node. In particular, let $x$ many agents $\{a_1, a_2, \ldots, a_x\}$ be positioned at a node $u$ and $r_u$ be settled at $u$. The agents need to find the empty neighbor of $u$ (if any) so that all the agents can move to it. Thus, the agents $a_1, a_2, \ldots, a_x$ simultaneously move through the ports $p_1, p_2, \ldots, p_x$. If any agent $a_i$ finds an agent settled at the neighboring node, it brings it along with it at $u$. Thus, either the number of agents doubles at $u$ that can further move to neighboring nodes of $u$, or there is an agent that came alone, i.e., there is an empty neighbor. Thus, all the agents move to this new node. Otherwise, if all the agents cannot find any empty neighbor and there is no further port left to be explored, then they backtrack from $u$. 
\\
\\
Let a leader $a$ be performing its DFS and is currently at a node $u_1$. It performs its probing at the node $u_1$ in order to determine a vacant node (if any) in its neighbourhood. If a node that is detected during the probing, has a settled agent with a different group identifier, then that node is considered to be unsettled. As a consequence, $a$ may move forward to a node that is inside another leader's territory. If the node $u$ has a settled agent that belongs to a weaker group, then $a$ incorporates $\psi(u)$ into its own group by providing the identifier $(a.leader, a.lvl)$ to the agent $\psi(u)$. However, if the leader $a$ encounters a stronger leader or a stronger settler, it becomes a zombie and follows the path towards the stronger leader. The agent $a$ stops its execution of the DFS algorithm and is no longer a leader. Thus, it meets with the stronger leader eventually, further helping it to expand its territory. With this, dispersion is achieved on the graph. After the dispersion configuration is achieved, there may be multiple DFS trees corresponding to different labels. Recall that our objective for phase $2$ is to make a single DFS tree.
In the initial configuration, after the execution of the first round, each group of agents elects a leader. This leader, say $a.leader$ is stored as the group identifier for phase $2$ in another variable, $prm\_leader$. The decision for the merging of the groups for settled agents done in phase $2$ of the algorithm is solely based on the variable $prm\_leader$.\\
\\
\noindent{\textbf {Dispersed configuration to D-2-D \cite{GorainKM24}}}: 
The agent, say $r_{st}$ (the agent that successfully completes phase $2$ and constructs its DFS tree), initiates phase $3$ of the algorithm. It begins the traversal of the graph via the tree edges of the DFS tree formed in phase 2. It receives information about the child ports of each settled agent. After receiving the information about the child port, the agent $r_{st}$ traverses through it and continues to traverse via the child port of each settled agent unless it reaches a leaf node (i.e., a settled agent that has no child in the DFS tree). Let the settled agent at this node be $r_u$. The agent $r_u$ traverses through all its ports one by one and checks if there is any settled agent that is permanently settled at its position. If yes, $r_u$ moves to such a permanently settled agent; otherwise, it permanently settles itself at its original position. In this way, the agent $r_{st}$ moves through the tree edge one by one, and the settled agents decide sequentially whether to stay at their position or not. Thus, finally, the D-2-D configuration is attained by the agents. 

Now, we proceed with a detailed description of our algorithm.
\begin{table}[]
\begin{tabular}{|c|l|l|p{4.3cm}|}
\hline
\multicolumn{1}{|l|}{\textbf{Iteration}} & \textbf{Rounds} & \textbf{Phase}                                                            & \textbf{Activity Performed}                  \\ \hline
\multirow{3}{*}{i}                       & 1-14           & \begin{tabular}[c]{@{}l@{}}Phase 1 \\(Dispersion)\end{tabular}            &Runs Phase 1 unless it is over\\ \cline{2-4} 
                                         & 15-18          & \begin{tabular}[c]{@{}l@{}}Phase 2\\ (A single DFS tree)\end{tabular}     & Runs Phase 2 unless it is over               \\ \cline{2-4} 
                                         & 19-24          & \begin{tabular}[c]{@{}l@{}}Phase 3\\ (Dispersion into D-2-D)\end{tabular} & Runs Phase 3 only if Phases 1 and 2 are over \\ \hline
\end{tabular}
\caption{Functions in each slot}
    \label{table:slot_func}
\end{table}

\subsection{Detailed Description of our Algorithm} \label{sec:algorithm}

In this section, we explain in detail the algorithm to solve D-2-D on a graph with an arbitrary initial configuration, given that the agents have no global knowledge of the graph. Table \ref{tab:variables} describes the variables maintained by each agent for the execution of the algorithm. 
\begin{table}[]
    \centering
    \begin{tabular}{|c|l|}
    \hline
    \textbf{Variable(Phase) } & \textbf{Description}\\
    \hline
        $a.settled(1)$ & Initialised to $\bot$. If the agent $a$ settles at a node then it\\& sets this variable to $\top$   \\
        \hline
        $a.next(1)$ & Initialised to $\bot$. During probing, if an agent $r$ finds an\\& empty node after moving through port, say $p_i$, then the\\& settled agent $a$ sets $a.next=p_i$   \\
        \hline
        $a.prnt\_prt(1)$ & Initialised to $-1$. This variable stores the parent port\\& for the settled agent $a$ \\
        \hline
        $a.vprnt(1, 2)$ & Initised to $0$. If an agent $a$ is acting as virtually settled\\& for its parent $r$ that is out for the execution of phase $2$,\\& then $a$ sets $a.vprnt=1$    \\
        \hline
        $a.v\_prnt\_port(1, 2)$ & Initialised to $-1$. When an agent $a$ settles, it sets\\& $a.v\_prnt\_port=a.parent$. This is to ensure that even if\\& in further rounds it gets included in some other leader's\\& territory, it does not lose knowledge of its parent port\\& from the first DFS traversal \\
        
        \hline
        $a.sticky(1, 2)$  &  Initialised to $0$. In the iteration $1$, the agent $a$ that\\& settles at the root node sets $a.sticky=1$, else for\\& others it is $0$     \\
        \hline
        $a.prm\_leader(1, 2)$ & Initialised to $-1$. If the current round is round $1^1$, the\\& agent $a$ sets $a.prm\_leader=a.leader$. This variable is\\& not updated in any further round   \\
        \hline
        $a.ph(1, 2, 3)$ & Initialised to $0$. Whenever a settled agent $a$ enters into\\& phase $3$, it updates $r_i.ph=3$  \\
        \hline
        $a.prt\_ent(1, 2, 3)$ & Initialised to $-1$. The port used by an agent $a$ to enter\\& into a node is stored in this variable   \\
        
        \hline
        $a.nxt\_prt(2)$ & Initialised to $-1$. This variable stores the port that is\\& to be taken by the $leading$ agent $a$ to traverse according\\& to its DFS traversal   \\
        \hline
        $a.prm\_parent(2)$ & The parent port for the settled agent $a$ maintained\\& according to the DFS traversal in phase $2$ is stored\\& in this variable  \\
        
        \hline
        $a.crnt\_port(3)$ & If $r_{st}$ is present at a node $u$ with a settled agent, say $a$,\\& and $r_{st}.ph=3$, then $a$ sets $a.crnt\_port$ to the smallest\\& child port that is not yet taken by the agent $r_{st}$ to move\\& through in the $explore$ state  \\
        \hline
        $a.final\_set(3)$ & The agents that finally settle at a node after the\\& execution of phase $3$ of the algorithm, set their\\& $final\_set=1$, else it is set to $0$  \\
        \hline
        $a.final\_port(3)$ & The smallest port $i$ of a settled agent $a$ such that the\\& agent settled at the node via port $i$, say $r_j$ has\\& $r_j.final\_set=1$ \\
        \hline
        $a.count(3)$ & Let $a$ has $a.decision=1$. The variable $count$ represents\\& the number of settled agents in the neighborhood of $a$\\& which have $final\_set=0$ i.e., their decision is yet to\\& be taken\\
        \hline
    \end{tabular}
    \\
    \vspace{0.5cm}
    \caption{Description of variables used in different phases maintained by agent $a$ is provided in the table}
    \label{tab:variables}
\end{table}

Note that to make phase $1$ and phase $2$ run in parallel, we make some modifications in the algorithm of \cite{Sudo}, for example, adding a few rounds where agents wait. In the description of each round, we explicitly reference the dispersion technique from \cite{Sudo} along with our adjustments, if and when required, to run phase 1 and phase 2 in parallel. 
Now we are ready to proceed with the details of each round of an iteration.

\vspace{0.2 cm}
Our algorithm proceeds in iterations, where each iteration consists of $24$ rounds. In particular, if the current round is round $t$ and $t=24\cdot i+r$, then $t$ is the $r^{th}$ round of $(i+1)^{th}$ iteration. We denote a round $r$ of an iteration $i$ as $r^i$. This serves two purposes: it minimizes interference between multiple DFSs and enables the simultaneous execution of different phases of the algorithm. Below we describe the functioning of rounds $1-19$ of an iteration $i$. These rounds are particularly for the execution of phases $1$ and $2$. Phase 3 runs from round 20 to round 24 and helps in achieving a D-2-D configuration.
\begin{itemize}
    \item \underline{\textbf{Round $1^i$ \& Round $2^i$}}: (Technique from \cite{Sudo}) Initially, all the agents are leaders, i.e., each agent $r_u$ has $r_u.leader=r_u.ID$. Now, after reaching a new node $v$, there are two possible cases.
    \\
    \begin{itemize}
        \item[(a)] All the agents present at the current node are leaders: In this case, the strongest leader becomes the new leader, while the others become zombies and follow the new leader.  
        \item[(b)] There are leaders and a settler present at the current node: In this case, if the settler is stronger than all the leaders present at the current node, then the leaders become zombies and chase for the leader of the settler by following the $next$ pointer of the settlers at each node. This movement by the zombies occurs in the different dedicated round that happens in rounds $10^i-12^i $.
        However, if one of the leaders, say $r_u$, is the strongest of the remaining leaders and the settler, then the other leaders become zombies and set their leader as $r_u$.   
    \end{itemize}
\vspace{0.2cm}   
    
\noindent{\textbf{Our adjustments}:} Since we run phases $1$ and $2$ in parallel, it may be the case that a node $u$ where an agent $r$ is settled in phase 1 and later at some round, it left $u$ for its conduct of phase $2$ of the algorithm. Note that this can occur only if $r$ is the leading agent i.e. it settled at node $u$ in iteration $i$. In this case, if another agent $v$ reaches $u$ during the rounds when $r$ is not there, $v$ might think that no agent is settled at $u$. To resolve this problem, we divide the task into two rounds.  
\\
        \begin{itemize}
            \item Round $1^i$: If there is a settled agent present at the current node, then the decision is made based on the cases (a) and (b) described above. Otherwise, if there is no settled agent present at the current node, then the decision is made in the round $2^i$. This is because the agents need to verify whether the current node is really vacant or not.

            If $r_v$ is a settled agent with $r_v.vprnt=1$ and it moved in the previous round i.e. in the round $24$ of iteration $i-1$, then $r_v$ returns to its original position via $r_v.prt\_ent$. This movement of $r_v$ is a to-and-fro movement between the node it is originally settled at and the node where a leading agent was settled. This helps the agents to understand whether the node is originally vacant or not. Figure \ref{fig:vprnt} explains the requirement of such an agent.
\\
            \item Round $2^i$: The decision is made based on the cases (a) and (b) described above (if not already taken in the previous round). If $r_v$ is a settled agent with $r_v.vprnt=1$, then it moves through $r_v.v\_prnt\_port$ in this round. 
\end{itemize}
\vspace{0.2cm}
However, if $i=1$, i.e. the current iteration is the first one, then the non-leader agents $r_u$ set $r_u.leader=r_u.prm\_leader=a.ID$, where $a$ is the leader at the current node. The value of $leader$ may change further during phase $1$ (according to the algorithm of \cite{Sudo}).

\begin{figure*}[t!]
    \centering
        \includegraphics[width=.5\linewidth]{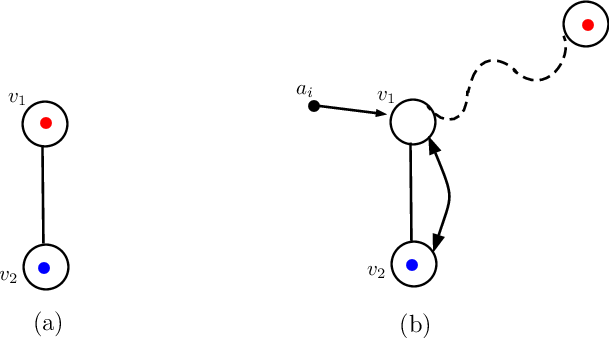}   
    \caption{(a) The red agent is the leading agent that settles at $v_1$. The blue agent settles at $v_2$ and sets its $vprnt=1$. This means it is going to make a to-and-fro movement between $v_1$ and $v_2$ as and when directed (b) The leading agent (red agent) is out from the node $v_1$ for its conduct of phase $2$. An agent $a_i$ visits node $v_1$ and finds it to be vacant. Since blue agent makes hops between $v_1$ and $v_2$, $a_i$ understands that the node $v_1$ is occupied by some agent.}\label{fig:vprnt}
\end{figure*}

\vspace{0.2cm}
\item \underline{\textbf{Round $3^i$ \& Round $4^i$}}: Similar to the cases described for the rounds $1$ and $2$, we need to divide the task into two rounds of $3$ and $4$ to check if there is any settled agent at the current node that is out for its conduct of phase $2$.
    \begin{itemize}
        \item Round $3^i$: In this round, the settler $r_u$ with $r_u.vprnt=1$ that moved in the previous round i.e. round $2^i$ of iteration $i$ returns to its original position via $r_u.prt\_ent$. 
        
        (Technique from \cite{Sudo}) If the leader $a$ finds a settler at the current node, the changes are made based on the following cases, (i) $a$ increments its level by $1$ if it finds a zombie with the same level, (ii) if the settler at this node $w$ is weaker than $a$, then this settler is incorporated into the $a's$ territory by changing its group identifier to $(a.ID, a.lvl)$.
        However, if the leader $a$ finds no settler at the current node, then no changes are made in this round.
        \\
        \item Round $4^i$: In this round if the leader $a$ finds a settler at the current node, then it proceeds similarly to the round $3^i$. However, if the leader $a$ finds no settler at the current node, then this means the current node is unsettled. Thus, the minimum ID zombie settles at this node. However, if there are no accompanying zombies, then $a$ becomes a waiting leader and does nothing until round $1$ of the
        next iteration.
    \end{itemize}

\vspace{0.2cm}
    If $i=1$, then the agent that settles at the vacant node, say $r_p$, sets $r_p.sticky=1$. This agent is responsible for executing phase $2$ of the algorithm. Whereas the leader at the current node, say $a$, sets $a.flag=true$ which implies that the next agent that settles from the group has to virtually take the position of its parent agent whenever required. If the agent $a$ is alone at the current node, then it sets $a.flag_1=true$ besides $a.sticky=1$.
    
    However, if $i>1$ i.e., this is not the first iteration, and there is an agent present at the current node with $flag=true$, then the agent that settles at the current node, say $r_v$, sets $r_v.vprnt=1$. Also, it sets $r_v.v\_prnt\_port = r_v.prnt\_port$ and moves through this port whenever required. Note that the agent $r_v$ stores its value of $prnt\_port$ also in $v\_prnt\_port$ when it settles at a node. This is to ensure that if the agent $r_v$ is included in some other leader's territory then it may change its $prnt\_port$. But the value of $v\_prnt\_port$ does not change. Thus, it continues its to-and-fro movement between its current node and the node where $leading$ agent was settled via the $v\_prnt\_port$.  
\\
    \item \underline{\textbf{Round $5^i$}}: (Technique from \cite{Sudo}) The settled agents $r_u$ with $r_u.help\neq \bot$, moves to its neighbor via the port $r_u.help$ in this round. These agents help in the probing technique for a leader for the iteration $i$.
\\
    \item \underline{\textbf{Rounds $6^i-10^i$}}: (Technique from \cite{Sudo}) Each leader $a$ maintains a flag variable $a.InitProbe\in\{0, 1\}$. This variable is initially set to $1$. This flag variable is raised when the leader $a$ requires probing. Probing is required each time a forward or backward move is made by the leader or level is incremented by it. The subroutine of Probe() is run in rounds $6^i-10^i$. The control of Probe() is returned after the end of round $10^i$ even if the probing is not yet complete. This is completed in the next $6-10$ rounds of the iteration $i+1$.
\\
    \item \underline{\textbf{Rounds $11^i-13^i$}}: (Technique from \cite{Sudo}) In the next three rounds, zombies chase for leaders. A zombie updates its location and swarm levels in rounds $11^i$ and $12^i$. We keep two rounds for this separately, as the situation may arise that the zombies end up at a node that seems unsettled, but the agent settled at it is a $leading$ agent and is out for its execution of phase $2$. The zombie follows the $next$ pointers of the settled agent at the current node. \\
    At the end of round $12^i$, the zombie chases for the leader by moving through the information of the $next$ variable of the settled agent at the current node. The weak zombies move in both rounds $12^i$ and $13^i$, while the strong zombies move only in round $13^i$.
    
\vspace{0.2cm}
    \noindent{\textbf{Our adjustments}:} The settler $c$ with $c.vprnt=1$ moves through $c.v\_prnt\\\_prt$ in round $11^i$ and returns to its original position in round $12^i$. The required information is thus exchanged for the virtually settled node in the round $11^i$ and for the settled agent, which acts as virtually settled for its parent in the round $12^i$.
\\
    \item \underline{\textbf{Rounds $14^i-15^i$}}: (Technique from \cite{Sudo}) If a leader is at a node $w$ and it observes $done=true$ for the settled agent at the current node, then this means the probing for the node $w$ is completed. Based on the probing technique, the leader along with the zombies either makes a $forward$ or a $backward$ move in the round $14^i$. Each time $a$ makes a $forward$ move to a node $u$, it sets $a.prnt\_prt=a.prt\_entered$ after the move. This event occurs in round $15^i$ and only when the last move is $forward$.
\\

    \item \underline{\textbf{Rounds $16^i-19^i$}}: 
    \textbf{(Our Merging Technique)} These rounds are dedicated to the movement of the $leading$ agent that executes phase $2$ of the algorithm. In this phase, such an agent moves through the nodes of the graph where an agent is settled and constructs its new DFS tree. Based on the $prm\_leader$ value of the settled agent, either the $leading$ agent expands its newly constructed DFS tree or waits at that node itself. The aim is to construct a single DFS tree by the end of this phase. Let a $leading$ agent be denoted by $r_{st}$ and is currently present at a node $v$. Let $r_{st}.flag_1=false$. It moves through the port $r_{st}.nxt\_prt$ in the round $16^i$. Suppose $r_{st}$ enters the node $v_1$. In the round $17^i$, it checks if there is an agent settled at the node $v_1$. 
    If yes, then $r_{st}$ stays at the node $v_1$ and the agent settled at the node $v_1$, say $r_1$, makes amendments in its variables for phase $2$ as described further.
Otherwise, the agent $r_{st}$ waits for the next round, i.e., round $18^i$. If there is a settled agent at node $v_1$, then changes are made based on the cases described further. However, if neither in round $17^i$ nor $18^i$, a settled agent is present at the node $v_1$, then the agent $r_{st}$ decides to return to its previous node $v$ in round $19^i$ and wait till the next available $16^{th}$ round. This is because the agent $r_{st}$ understands that since there is no settled agent at the current node $v_1$ means dispersion is not yet completed. Thus it returns to its previous node and re-checks the node $v_1$ in the next round $16^i$. In other words, the agent $r_{st}$ continues its movement from one node to another according to its DFS traversal only if it finds a settled agent at that node. Otherwise, it waits for that vacant node to be settled so that it can proceed further.  
The decision is made by the settled agent $r_1$ at the node $v_1$ and the agent $r_{st}$ based on the following cases if the agent $r_{st}$ has $state=explore$:
\\
    \begin{itemize}
        \item If the settled agent $r_1$ has $r_1.prm\_leader<r_{st}.prm\_leader$, then the agent $r_1$ updates $r_1.met\_prm=r_1.prm\_leader=r_{st}.prm\_leader$, and $r_1.prm\_parent=r_{st}.prt\_ent$. In other words, $r_{st}$ met with $r_1$ for the first time during the construction of its DFS tree and $r_1.prm\_leader<r_{st}.prm\_leader$. Thus, $r_1$ updates its $prm\_parent$ to the $prt\_ent$ value of $r_{st}$. Note that $r_1$ also updates its $met\_prm$ value to ensure that further if $r_{st}$ visits $r_1$ in the $explore$ state, then $r_{st}$ understands that it is already visited and can $backtrack$ from that node to construct its DFS tree in the correct manner. Finally, $r_{st}$ updates $r_{st}.nxt\_prt=(r_{st}.prt\_ent+1)\mod\delta(u)$. If $r_{st}.nxt\_prt=r_{st}.prt\_ent$, then it updates $r_{st}.state=backtrack$. It moves through this updated $r_{st}.nxt\_prt$ in the next $16^{th}$ round. 
        \item If the settled agent $r_1$ has $r_1.prm\_leader>r_{st}.prm\_leader$, then $r_{st}$ stops the further execution of the algorithm and waits at its previous node. This is to ensure that the $leading$ agent with the largest value of $prm\_leader$ completes the construction of its DFS tree.
        \item If the settled agent $r_1$ has $r_1.prm\_leader=r_{st}.prm\_leader$ but $r_1.met\\\_prm\neq r_{st}.prm\_leader$, then the agent $r_1$ updates $r_1.prm\_parent=r_{st}.prt\_ent$. This means $r_1$ belongs to the territory of $r_{st}$ but it is visited for the first time during the construction of DFS tree by $r_{st}$. $r_{st}$ further updates $r_{st}.nxt\_prt=(r_{st}.prt\_ent+1)\mod\delta(u)$. If $r_{st}.nxt\_prt=r_{st}.prt\_ent$, then it updates $r_{st}.state=backtrack$. It moves through this updated $r_{st}.nxt\_prt$ in the next $16^{th}$ round. 
        \item If the settled agent $r_1$ has $r_1.met\_prm=r_{st}.prm\_leader$, then the agent $r_{st}$ sets its state to $backtrack$ and returns to the previous node via $r_{st}.prt\_ent$ in the next $16^{th}$ round. 
    \end{itemize}
\vspace{0.2cm}
    If the agent $r_{st}$ has $state=backtrack$, then the settled agent present at the current node $u$, say $r_1$, definitely has $r_1.prm\_leader=r_{st}.prm\_leader$ and $r_1.met\_prm=r_{st}.prm\_leader$. Thus, the following cases are checked: 
    The agent $r_{st}$ increments $r_{st}.prt\_ent=(r_{st}.prt\_ent+1)\bmod \delta(u)$.
\\
    \begin{itemize}
        \item If the updated value of $r_{st}.prt\_ent\neq r_{1}.prm\_parent$, then $r_{st}.nxt\_prt$ is set to $r_{st}.prt\_ent$ and $r_{st}.state=explore$. It then moves to the adjacent node in the next $16^{th}$ round.
        \item   However, if the updated value of $r_{st}.prt\_ent= r_{1}.prm\_parent$, then $r_{st}.nxt\_prt$ is set to $r_{1}.prm\_parent$ and $r_{st}.state=backtrack$. It then moves to the previous node in the next $16^{th}$ round.
    \end{itemize}

Note that if $r_{st}$ is a leading agent and $r_{st}.flag_1=true$, then it was alone in the initial configuration\footnote{In the initial configuration, if an agent is present alone at a node, this case is addressed differently.}. Hence, there is no agent that can take its position virtually when it is out for its execution of phase $2$. When it moves through its $nxt\_prt$ in round $16^i$, if it finds an agent $r_u$ settled that has $r_u.prm\_leader>r_{st}.prm\_leader$, then $r_{st}$ moves to its original node where it was settled, sets $r_{st}.flag_1=false$, and stops its further execution of phase $2$. However, if $r_u.prm\_leader<r_{st}.prm\_leader$, then $r_u$ updates $r_u.vprnt=1$ and acts as virtually settled for the node where $r_{st}$ was originally settled. Furthermore, the agent $r_{st}$ proceeds based on the cases discussed above. With this, we handle the case when there is only a single agent present at a node in the initial configuration.\\  
    On the other hand, the settled agent $c$ with $c.vprnt=1$ moves through $c.v\_prnt\_port$ in round $16^i$ and returns to its original position in round $17^i$.
    Note that each time the agent $r_{st}$ visits a settled agent and makes it a part of its DFS, the settled agent also maintains the distance to be either $0$ or $1$ if it is at an even or an odd distance, respectively, w.r.t. the DFS traversal by the agent $r_{st}$ from its root. 
    After a $leading$ agent, say $r_{l_{min}}$ traverses through the graph and returns to its root node, it understands that it has completed phase $2$ of the algorithm. After this completion, the agent $r_{l_{min}}$ has all the remaining $leading$ agents that stopped their execution of phase $2$ due to their meeting with a settled agent with a larger value of $prm\_leader$. Thus, now the agent $r_{l_{min}}$ re-traverses through the graph. It moves from a node $u$ to a node $v$ in round $16^i$ following its DFS traversal. While the agent $r$ with $r.vprnt=1$ continues its movement as described earlier. 
    In the round $17^i$, if $r_{l_{min}}$ meets with an agent $r$ at $v$ with $r.vprnt=1$, then this means the current node $v$ is virtually settled by the agent $r$. Thus, the minimum ID $leading$ agent, say $r_p$, from the group of $leading$ agents present with $r_{l_{min}}$ settles at the node $v$ and sets $r_p.prm\_parent=r_{l_{min}}.prt\_ent$. However, the agent $r$ sets $r.vprnt=0$ and moves to its original position in this round. An example illustrating the execution of phase $2$ of the algorithm is provided in the figure \ref{fig:phase 2}.

    \begin{algorithm}\small
    \caption{The behavior of a leader $a$} \label{alg:leader_1_4}
    {
    \tcc{***** \textbf{Round $1$} *****}
    \If{there is a settled agent present at the current node}
    {
        Let $a$ be currently placed at node $w$\\
        \If{there is an agent $b$ that is either a settler or a leader and $b \succ a$  }
        {
            $a.lvl_L\leftarrow a.lvl_S \leftarrow a.lvl$\\
            $a.leader\leftarrow b.leader$
        }
    }
    \ElseIf{there is no settled agent present at the current node}
    {
        do nothing
    }

    \tcc{***** \textbf{Round $2$} *****}
    \If{there is a settled agent at the current node}
    {
        \If{no changes were made in the previous round}
        {
            \If{there is an agent $b$ that is either a settler or a leader and $b \succ a$  }
            {
            $a.lvl_L\leftarrow a.lvl_S \leftarrow a.lvl$\\
            $a.leader\leftarrow b.leader$
            }
        }
    }
    
        \tcc{***** \textbf{Round $3$} *****}
    \If{there is a settled agent $r_w$ at the current node}
        {
        \If{$r_w\succ a$}
        {
           $r_w.parent \leftarrow a.prt\_ent$\\
           $(r_w.leader,r_w.lvl) \leftarrow (a.ID,a.lvl)$
        }
        \ElseIf{there is a zombie $b$ such that $b.lvl=a.lvl$}
            {
            $(a.lvl, b.lvl)\leftarrow (a.lvl+1, 0)$\\
            $r_w.parent=\bot$ and$a.InitProbe=1$\\
            $(r_w.leader, r_w.lvl)\leftarrow (a.ID, a.lvl)$\\
            \If{$a.InitProbe=1$}
            {
                $(r_w.next, r_w.checked, r_w.help, r_w.done)\leftarrow (\bot, -1, \bot, false)$\\
                update $a.InitProbe=0$\\
                call Probe(a)
            }
            }
        }
    \tcc{***** \textbf{Round $4$} *****}
    \If{no changes were made in the previous round}
    {
        \If{there is no settled agent at the current node}
        {
            \If{$a$ is an active leader i.e. there is at least one zombie}
            {
                the minimum ID zombie $r_w$ settles at $w$\\
                $r_w.parent=a.prt\_ent$\\
                $(r_w.leader, r_w.lvl)\leftarrow (a.ID, a.lvl)$\\
                \If{$a.flag=false$}
                {
                    \If{the current iteration is the first iteration}
                    {
                    set $r_w.sticky=1$\\
                    leader $a$ updates $a.flag=true$
                    }
                }
                \Else
                {
                    $r_w$ sets $r_w.vprnt=1$ and $r_w.v\_prnt\_port=r_w.prnt\_port$\\
                    $a$ updates $a.flag=false$
                }
                
                call Probe(a)
            }
        }
        \Else
        {
            \If{$r_w\succ a$}
            {
           $r_w.parent \leftarrow a.prt\_ent$\\
           $(r_w.leader,r_w.lvl) \leftarrow (a.ID,a.lvl)$
            }
        \ElseIf{there is a zombie $b$ such that $b.lvl=a.lvl$}
            {
            $(a.lvl, b.lvl)\leftarrow (a.lvl+1, 0)$\\
            $r_w.parent=\bot$ and $a.InitProbe=1$\\
            $(r_w.leader, r_w.lvl)\leftarrow (a.ID, a.lvl)$\\
            \If{$a.InitProbe=1$}
            {
                $(r_w.next, r_w.checked, r_w.help, r_w.done)\leftarrow (\bot, -1, \bot, false)$\\
                update $a.InitProbe=0$\\
                call Probe(a)
            }
            }   
        }
    }

}
\end{algorithm}

\begin{algorithm}\small
    \caption{The behavior of a settled agent $r_u$ in round $5$}\label{alg:settled_2}
    \tcc{***** \textbf{Round $5$} *****}
    \If{$r_u.help\neq \bot$}
    {
        move through the port $r_u.help$
    }
\end{algorithm}

\begin{algorithm}\small
    \caption{Probe(a)}\label{alg:probing}
    \tcc{***** \textbf{Round $6$} *****}
    Let $w$ be the current node where $a$ is present and $r_w$ be the settled agent at $w$\\
    
 $r_w.next\leftarrow \begin{cases}
       \min P & if ~P\neq \emptyset\\ 
       \bot & otherwise
   \end{cases}$
   
    where $P=[0, r_w.checked]\setminus\{b.prt\_ent | b \text{ is a settled agent at $w$ except $r_w$}\}$ \\
    $b.help\leftarrow\bot$ for all settled agents $b$ such that $b\prec a$\\
    All the agents $b$ go back to their original positions\\
    
    \tcc{***** \textbf{Round $7$} *****}
    \If{$r_w.next \neq \bot \lor r_w.checked = \delta(w)-1$}
    {
        update $b.help\leftarrow \bot$ for all settled agents $b$ at $w$ except $r_w$\\
        All such agents $b$ move to their original positions\\
        update $r_w.done=true$
    }
    \tcc{***** \textbf{Round $8$} *****}
    \Else
    {
       Let $\{a_1, a_2,..., a_x\}$ be the set of agents at the node $w$ except $r_w$\\
       Let $\delta'=\min(x, \delta(w)-1-r_w.checked)$\\
       Let $u_i=N(w, i+r_w.checked)$ for $i=1, 2, ..., \delta'$\\
       \For{$a_i \in \{a_1, a_2,..., a_x\} $}
       {
       Each agent $a_i$ in parallel moves to the node $u_i$

    \tcc{***** \textbf{Round $9$} *****}
        \If{$(a_i.leader, a_i.lvl)=(r_{u_i}.leader, r_{u_i}.lvl)$}
        {
            set $a_i.found=true$\\
            $r_{u_i}.help=a_i.prt\_ent$
        }
        \Else
        {
            set $a_i.found=false$
        }
        move through $a_i.prt\_ent$
        }

    \tcc{***** \textbf{Round $10$} *****}
    \If{$\exists i \in [1, \delta']: a_i.found=false$}
    {
        $r_w.next\leftarrow i+r_w.checked$
    }
    $r_w.checked=r_w.checked+\delta'$\\
    All the settled agents at $w$ return to their original positions
}
    
\end{algorithm}

\begin{algorithm}\small
    \caption{The behavior of a zombie $z$ in rounds $11-13$}\label{alg:zombie_11_13}
    \tcc{***** \textbf{Round $11$} *****}
    $(z.lvl_L, z.lvl_S)\leftarrow (\psi(w).lvl, \max\{z'.lvl | z' \text{is a zombie at the current node}\})$\\
    \If{there is no leader at the current node and $z$ is a weak zombie}
    {
        Move to $N(\nu(z), \psi(\nu(z)).next)$
    }
    \tcc{***** \textbf{Round $12$} *****}
    $(z.lvl_L, z.lvl_S)\leftarrow (\psi(w).lvl, \max\{z'.lvl | z' \text{is a zombie at the current node}\})$\\
    \If{there is no leader at the current node and $z$ is a weak zombie}
    {
        Move to $N(\nu(z), \psi(\nu(z)).next)$
    }
    \tcc{***** \textbf{Round $13$} *****}
    \If{there is no leader at the current node}
    {
        Move to $N(\nu(z), \psi(\nu(z)).next)$
    }
    
\end{algorithm}

\begin{algorithm}\small

    \caption{Algorithm for agent $r_{st}$ with $r_{st}.sticky=1$} 
    \label{alg:r_st_checking}
    \tcc{***** \textbf{Round $16$} *****}
    move through $r_{st}.nxt\_prt$\\
    \tcc{***** \textbf{Round $17$} *****}
    \If{there is a settled agent $r_i$ at the current node}
    {
    call function Check($r_{st}$, $r_i$)
    }
    \ElseIf{there is no settled agent at the current node}
    {
    do nothing
    }
    \tcc{***** \textbf{Round $18$} *****}
    \If{no changes were made in round $17$}
    {
        \If{there is a settled agent $r_i$ at the current node}
        {
        call function Check($r_{st}$, $r_i$)
        }
        \ElseIf{there is no settled agent}
        {
    \tcc{*****\textbf{Round $19$} *****}
        return to the previous node via $r_{st}.prt\_ent$
        }
    }
\end{algorithm}

\begin{algorithm}\small
\tcc{comparing the value of $prm\_leader$ of the settled agent}
    \caption{function Check($r_{st}$, $r_i$)}\label{alg:check}
    \If{$r_{st}.state=explore$}
{
    \If{the settled agent $r_i$ has $r_i.prm\_leader<r_{st}.prm\_leader$}
    {
        $r_i.met\_prm=r_i.prm\_leader=r_{st}.prm\_leader$\\
        $r_i.prm\_parent=r_{st}.prt\_ent$
    }
    \ElseIf{the settled agent $r_i$ has $r_i.prm\_leader>r_{st}.prm\_leader$}
    {
    the agent $r_{st}$ moves via $r_{st}.prt\_ent$ and waits at the node
    }
    \ElseIf{the settled agent $r_i$ has $r_i.prm\_leader=r_{st}.prm\_leader$ but $r_i.met\_prm\neq r_{st}.prm\_leader$}
    {
    agent $r_i$ updates $r_i.prm\_parent=r_{st}.prt\_ent$
    }
    \ElseIf{the settled agent $r_i$ has $r_i.met\_prm=r_{st}.prm\_leader$}
    {
    set $r_{st}.state=backtrack$ and $r_{st}.nxt\_prt=r_{st}.prt\_ent$
    }
}
\ElseIf{$r_{st}.state=backtrack$}
{
update $r_{st}.prt\_ent=(r_{st}.prt\_ent+1)\mod{\delta(u)}$\\
\If{$r_{st}.prt\_ent\neq r_u.prm\_parent$}
{
set $r_{st}.nxt\_prt=r_{st}.prt\_ent$\\
$r_{st}.state=explore$
}
\Else
{
set $r_{st}.nxt\_prt=r_{u}.prm\_parent$\\
}
}

\end{algorithm}

\begin{algorithm}\small
    \caption{Algorithm for settled agent $r_j$ from rounds $20-23$}\label{alg:settled_20_23}
    \tcc{***** \textbf{Round $20$} *****}
    \If{$r_j$ is a settled agent and $r_j.dist=0$}
    {
    move through $r_j.prm\_parent$
    }
    \tcc{***** \textbf{Round $21$ }*****}
    \If{$r_j$ is a settled agent with $r_j.dist=1$ and agent $r_{st}$ with $r_{st}.ph=3$ is present with $r_j$}
    {
    $r_j$ updates $r_j.crnt\_port$ using the incoming ports of the agents that entered the current node in round $20$
    }
    \If{$r_j$ moved in round $20$}
    {
    move through $r_j.prt\_ent$
    }
    \tcc{***** \textbf{Round $22$} *****}
    \If{$r_j$ is a settled agent and $r_j.dist=1$}
    {
    move through $r_j.prm\_parent$
    }
    \tcc{***** \textbf{Round $23$} *****}
    \If{$r_j$ is a settled agent with $r_j.dist=0$ and agent $r_{st}$ with $r_{st}.ph=3$ is present with $r_j$}
    {
    $r_j$ updates $r_j.crnt\_port$ using the incoming ports of the agents that entered the current node in round $22$
    }
    \If{$r_j$ moved in round $22$}
    {
    move through $r_j.prt\_ent$
    }
\end{algorithm}

\vspace{0.2cm}
    \item \underline{\textbf{Rounds $20^i-21^i$}}: In the round $20^i$, each agent $r_j$ that is settled at an even distance from the root node (w.r.t. distance in the DFS traversal by $r_{st}$ in the phase $2$), moves through $r_j.prm\_parent$. 

    However, in round $21^i$, the settled agent at odd distant node $u$ updates its $crnt\_port$ if $r_{st}$ is present with it and has $r_{st}.ph=3$, by seeing the agents who moved in the $20^{th}$ round. The agents $r_j$ that moved in the $20^{th}$ round return to their original position via the port they entered through.
\\
    \item \underline{\textbf{Rounds $22^i-23^i$}}: In the round $22^i$, each agent $r_j$ that is settled at an odd distance from the root node (w.r.t. distance in the DFS traversal by $r_{st}$ in the phase $2$), moves through $r_j.prm\_parent$. 

    However, in round $23^i$, the settled agent at even distant node $u$ updates its $crnt\_port$ if $r_{st}$ is present with it and has $r_{st}.ph=3$, by seeing the agents who moved in the $22^{nd}$ round. The agents $r_j$ that moved in the $22^{nd}$ round return to their original position via the port they entered through.
\\
    \item \underline{\textbf{Round $24^i$}}: The agents run phase $3$ in this round. The phase $3$ of the algorithm runs as follows. 
    The $leading$ agent that returns to its original position i.e., the root, and has no further ports left to be explored, understands that phase $2$ has been completed. This implies that a single DFS corresponding to the $prm\_leader$ of this $leading$ agent exists in the graph. The phase $3$ of the algorithm is then initiated by this agent. It updates $r_{st}.ph=3$, and $r_{st}.state=explore$.
    The agent $r_{st}$, from node $u$, moves through the $r_u.crnt\_port$ information, where $r_u$ is the settled agent at $u$, to reach node $r_k$. If $r_k.crnt\_port\neq \bot$, then $r_{st}$ further moves through $r_k.crnt\_port$. It moves through the $crnt\_port$ value of each settled agent unless it reaches a settled agent, say $r_l$, that has $r_l.crnt\_port=\bot$, i.e., $r_l$ has no further child ports. Thus, the agent $r_l$ sets $r_l.decision=1$ and $r_{st}$ waits at the node for $2\delta(v)$ rounds, where $v$ is the current node.
    The agent $r_l$ moves through its ports one by one and does the following. 
    \\

    \begin{itemize}
    \item If there is at least one neighbor, say $r_j$, of $r_l$ with $r_j.final\_set=1$, then $r_l$ settles at the node where $r_j$ is settled and $r_l.final\_set=1$. 
    \item If there are no neighbors of $r_l$ with $final\_set=1$, then $r_l$ settles at its original position and sets $r_l.final\_set=1$.
    \end{itemize}
\vspace{0.2cm}
    The settled agent on the other hand when meet with an agent $r_l$ with $r_l.decision=1$, it decrements its value of $count$ and finally terminate when its value of $count$ is $0$. After visiting all its ports, if $r_l.final\_port$ is $\bot$ then it settles at its original position and sets $r_l.final\_set=1$. However, if $r_l.final\_port\neq\bot$, the agent $r_l$ moves through this $final\_port$ after visiting through all its ports, and settles there, after setting $r_l.final\_set=1$ and terminates.

    After the wait of $2\delta(v)$ rounds, the agent $r_{st}$ backtracks to the previous node where an agent, say, $r_n$ is settled, and checks the value of $r_n.crnt\_port$. If $r_n.crnt\_port\neq\bot$ then the agent $r_{st}$ changes its state to $explore$ and moves through the value of $r_n.crnt\_port$. However, if $r_n.crnt\_port=\bot$, then $r_{st}$ waits for $2\delta(r_n)$ rounds while $r_n$ sets $r_n.decision=1$.      
\end{itemize}

In this way, the three phases of the algorithm are run such that dispersion is achieved followed by the merging of all the DFSs into a single DFS tree and finally conversion of the dispersed configuration into a D-2-D one. Thus, D-2-D with termination is achieved by the agents without any prior knowledge of any of the global parameters. An example illustrating the execution of phase $3$ of the algorithm is presented in the figure \ref{fig:phase 3}.

The behavior of a leader in rounds $1-4$ is given in the Algorithm \ref{alg:leader_1_4}. The behavior of a settled agent in round $5$ is given in the Algorithm \ref{alg:settled_2}. The pseudo-code of Probing that proceeds in rounds $6-10$ is given in the Algorithm \ref{alg:probing}. A zombie functions in rounds $11-13$ and its pseudo-code is given in Algorithm \ref{alg:zombie_11_13}. The pseudo-code for agent with $sticky=1$ in rounds $16-19$ is given in Algorithm \ref{alg:r_st_checking}. Function Check() in which $leading$ agent compares its $prm\_leader$ value with that of the settled agent is given in Algorithm \ref{alg:check}. The algorithm for a settled agent in rounds $20-23$ is given in Algorithm \ref{alg:settled_20_23}. The algorithm for an agent with $ph=3$ in round $24$ is given in Algorithm \ref{alg: movement_r_st}.

\begin{figure}[ht!]
  \centering
  
  \includegraphics[width=.25\linewidth]{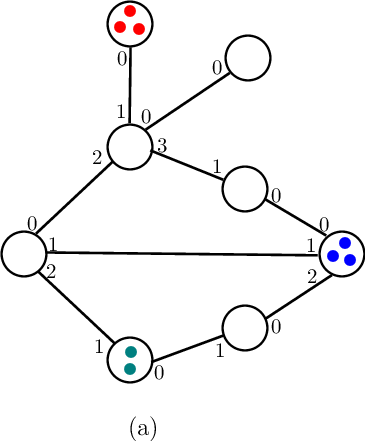} \hspace{0.1cm} 
  \includegraphics[width=.25\linewidth]{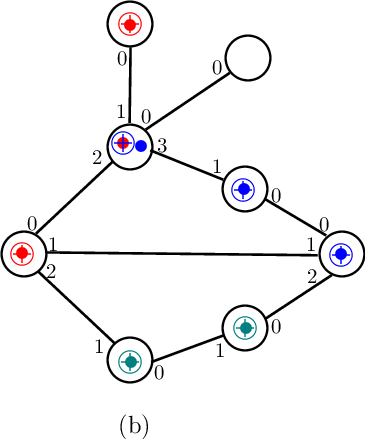}  \hspace{0.1cm}
  \includegraphics[width=.25\linewidth]{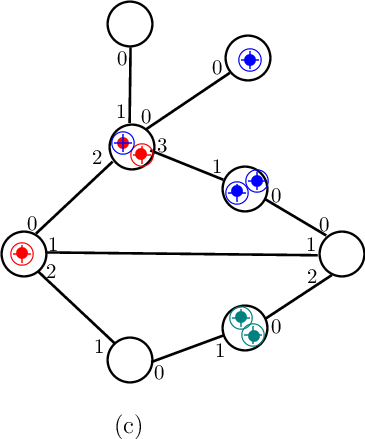}  \hspace{0.1cm}
  \\
  \vspace{0.2cm}
  \includegraphics[width=.25\linewidth]{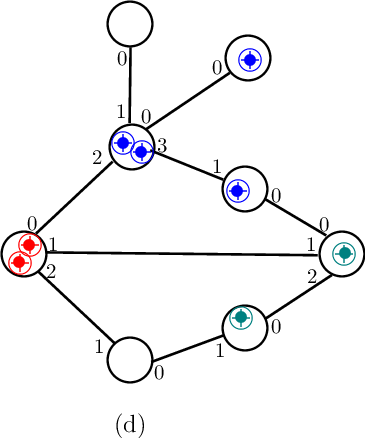}  \hspace{0.1cm}
  \includegraphics[width=.25\linewidth]{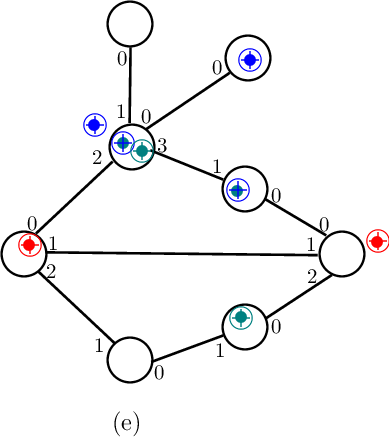} \hspace{0.1cm}
 \includegraphics[width=.25\linewidth]{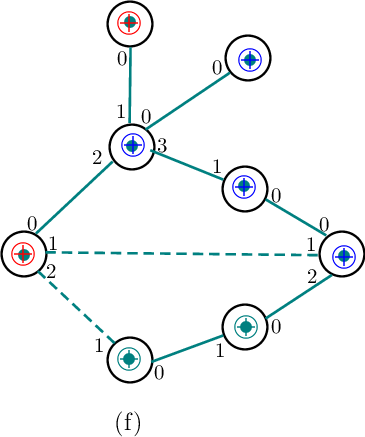}  \hspace{0.1cm}
\caption{(a) The initial configuration of the agents. Let red(`r'), blue(`b'), and green(`g') dots represent three different groups with $prm\_leader$ values as r, g, and b respectively ($r<b<g$). Let the dot represent the $prm\_leader$ and the outer circle represent its group in the dispersion phase, i.e. phase $1$ (b) The group of agents begin dispersion. The outer circle represents the group it is a part of (c) The $leading$ agents from each group start their respective DFS (d) The `b' $leading$ agent meets with a settled agent that has $prm\_leader=r$. Thus, its $prm\_leader$ value is changed to `b' (e) The `r' $leading$ agent waits at a node as it meets with a settled agent with $prm\_leader=b$(this is represented with the agent waiting outside the node). The `g' $leading$ agent meets with the settled agents with $prm\_leader=b$. Thus, these settled agents update their $prm\_leader=g$. While the `b' $leading$ agent waits at the node where it meets with the settled agent having $prm\_leader=g$ (f) Finally all the waiting $leading$ agents are picked by the `g' $leading$ agent and positioned at the virtually settled nodes. All the settled agents now have $prm\_leader=g$ and there is a single DFS tree(darkened lines represent the tree edge and the dotted lines represent the non-tree edges) in the graph corresponding to the DFS by `g' $leading$ agent.         }
\label{fig:phase 2}
\end{figure}
\begin{figure}[ht!]

  \includegraphics[width=.30\linewidth]{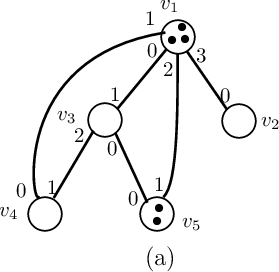}  
  \includegraphics[width=.30\linewidth]{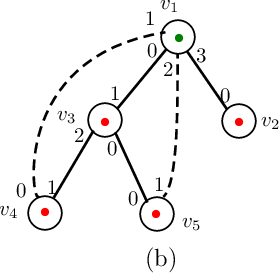}  
   \includegraphics[width=.30\linewidth]{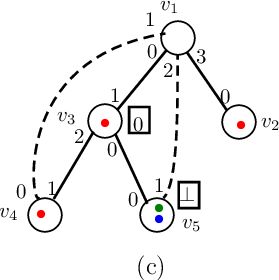}  
   \includegraphics[width=.30\linewidth]{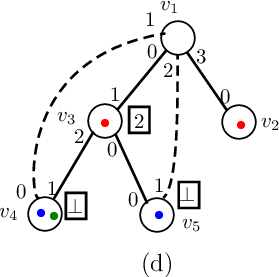}  \hspace{0.3cm}
  \includegraphics[width=.30\linewidth]{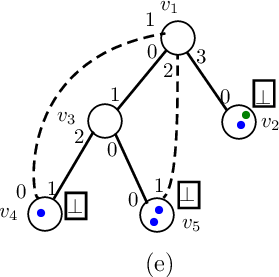}  \hspace{0.3cm}
   \includegraphics[width=.30\linewidth]{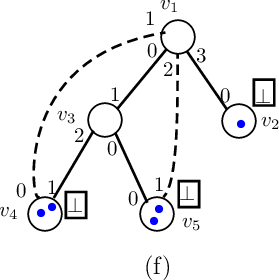}  

\caption{(a) agents are initially positioned arbitrarily in the graph (b) After the completion of phase $1$ and phase $2$ of the algorithm, there exists a single DFS tree with tree edges (shown as darkened lines) and non-tree edges (shown as dotted lines). The DFS tree corresponds to the DFS traversal by the agent $r_{st}$ with the largest value of $prm\_leader$ (denoted with the green colour) (c) $r_{st}$ moves through $crnt\_port$ values of the agents settled at $v_1$ and $v_3$ eventually reaching node $v_5$ which has no child ports. The agent that is settled at $v_5$ sets $decision=1$ and checks if there is any settled agent in its one-hop neighbourhood with $final\_set=1$. Failing to find one, it finally settles at its node by setting $final\_set=1$ which is shown by a blue dot (d) $r_{st}$ backtracks to $v_3$ and further moves through $crnt\_port=2$ to reach $v_4$. Following the decision of the agent settled at $v_4$, $r_{min}$ backtracks to node $v_3$ (where $crnt\_port=\bot$), the decision for this node is now taken. The agent settled at $v_3$ finds an agent at $v_3$ that has $final\_set=1$. Thus, the agent settled at $v_3$ vacates its position and settles at the node $v_5$. (e) $r_{st}$ arrives at $v_1$ which now has $crnt_port=3$ and moves through this port to reach $v_2$. Post the decision for this node, finally $r_{st}$ backtracks to $v_1$ (which now has $crnt\_port=\bot$). The decision for the agent $r_{st}$ is taken. It finds an agent with $final\_set=1$ in its one-hop. Thus, it moves through $final\_port=0$ and sets $r_{st}.final\_set=1$ (f) The D-2-D configuration. }
\label{fig:phase 3}
\end{figure}

\begin{algorithm}[ht!]
    \caption{Algorithm for agent $r_{st}$ with $r_{st}.ph=3$ } \label{alg: movement_r_st}
    /***** \textbf{Round $24$} *****/
    
    \tcc {Let $r_{st}$ be present at node $v$ where agent $r_j$ is settled} 
    
    \If{$r_{st}.state=explore$}
    {
        \If{$r_j.crnt\_port=\bot$}
        {
        set $r_{st}.prt\_ent=r_j.parent$\\
        wait for $48\delta(v)$ rounds \\
        set $r_{st}.state=backtrack$\\
        move through $r_{st}.prt\_ent$
        }
        \Else
        {
        move through $r_j.crnt\_port$\\
        }
    }
    \ElseIf{$r_{st}.state=backtrack$}
    {
        \If{$r_j.crnt\_port \neq \bot$}
        {
        set $r_{st}.state=explore$\\
        move through $r_j.crnt\_port$
        }
        \Else
        {
        set $r_{st}.prt\_ent=r_j.parent$\\
        wait for $48\delta(v)$ rounds\\
        move through $r_{st}.prt\_ent$
        }
    } 
\end{algorithm}

\begin{algorithm}[ht!]
    \caption{Algorithm for a settled agent $r_j$ in phase $3$} \label{alg: settled_ph3}
       
    \tcc{***** \textbf{Round $24$} *****}
    \If{$r_j.final\_set=0$}
    {
    \If{agent $r_{st}$ is present with $r_j$ at node $v$ and $r_j.crnt\_port=\bot$} 
    {
    set $r_j.decision=1$\\
    \For{$j=0$ to $\delta(v)-1$}
    {
        move through the port $j$\\
        \If{the settled agent has $final\_set=0$}
        {
        set $r_j.count=r_j.count+1$\\
        }
        \ElseIf{the settled agent has $final\_set=1$}
        {
            \If{$r_j.final\_port=-1$}
            {
            set $r_j.final\_port=j$
            }
            \Else
            {
            do not update $r_j.final\_port$
            }
        }
        move back to its original position $v$
    
    }
    
    \If{$r_j.final\_port\neq-1$}
    {
    move through $r_j.final\_port$\\
    set $r_j.final\_set=1$ and terminate\\
    }
    \Else
    {
    set $r_j.final\_set=1$
    }
    }
    }
    \If{$r_j.final\_set=1$}
    {
    each time a settled agent $r_i$ with $r_i.decision=1$ visits $r_j$, $r_j$ decrement $r_j.count$ by $1$\\
    terminate $r_j$ when $r_j.count=0$
    }    
\end{algorithm}

\section{Correctness and Analysis}
In this section, we provide the correctness of our algorithm and its analysis. 
The following lemma proves that phases $1$ and $2$ of the algorithm can be correctly run in parallel by the agents.
\begin{lemma} \label{lem:ph2_no_interruption}
   None of the variables maintained by the agents for the execution of phase $1$ (or phase $2$) is altered due to the execution of phase $2$ (or phase $1$) nor does any settled agent change its position to settle at another node.
\end{lemma}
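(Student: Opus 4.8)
The plan is to exploit the fact that the algorithm is strictly time-sliced: within every iteration of $24$ rounds, the phase-$1$ updates borrowed from \cite{Sudo} are confined to rounds $1$--$15$, the merging updates of phase $2$ are confined to rounds $16$--$19$, and (once enabled) phase $3$ to rounds $20$--$24$, as recorded in Table~\ref{table:slot_func} and in the detailed round description. Since no single round performs both a phase-$1$ write and a phase-$2$ write, it suffices to (i) partition the per-agent variables of Table~\ref{tab:variables} into those touched only by phase $1$, those touched only by phase $2$, and the few that are shared, then (ii) argue that the shared variables are never corrupted, and (iii) argue that no settled agent ever \emph{permanently} relocates during phases $1$ and $2$.

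For step (i) I would read off the annotations in Table~\ref{tab:variables}: variables such as $settled$, $next$, $prnt\_prt$, and the probing fields ($checked$, $help$, $done$, $found$, $InitProbe$) are written only inside rounds $1$--$15$, whereas the merging variables $nxt\_prt$, $prm\_parent$, $met\_prm$, and the phase-$2$ value of $state$ are written only inside rounds $16$--$19$. As these two groups are disjoint and are modified in disjoint round-slots, a write performed by one phase cannot overwrite a variable on which the other phase depends.

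The crux is step (ii), concerning the shared variables $vprnt$, $v\_prnt\_port$, $sticky$, $prm\_leader$, $ph$ and $prt\_ent$. For $prm\_leader$ and $sticky$ the argument is immediate: each is assigned exactly once (in round $1^1$ and round $4^1$, respectively) and, by construction, is never updated thereafter, so neither phase can alter it. The subtle case is the virtual-parent mechanism. I would show that $v\_prnt\_port$ is \emph{write-once}, being set equal to $prnt\_port$ at the instant the agent settles, and is therefore decoupled from $prnt\_port$, which phase $1$ may legitimately change when the agent is absorbed into a stronger leader's territory. Hence the to-and-fro hops of a $vprnt$ agent, scheduled as a move in round $2^i$ with a return in $3^i$, a move in $11^i$ with a return in $12^i$, and a move in $16^i$ with a return in $17^i$ (together with the cross-boundary return in $1^i$), always use a port that remains valid, and each hop is paired with a return \emph{within the same iteration, before the next occupancy-checking round}. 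Consequently, whenever a phase-$1$ agent tests in rounds $1$--$2$, $3$--$4$ or $11$--$12$ whether a node is occupied, that node is indeed occupied either by its own settled agent or by the $vprnt$ stand-in, so the value of $settled$ read by phase $1$ is correct and unaltered by phase $2$'s temporary displacement of the leading agent.

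Finally, for step (iii) I would enumerate every move a settled agent can make during phases $1$ and $2$: the ``help'' displacements of the probing subroutine (rounds $5$ and $8$--$10$), the virtual-parent hops above, and the DFS traversal of a $sticky$ (leading) agent in rounds $16$--$19$. Each probing move and each virtual-parent hop is explicitly followed by a scheduled return to the original node, and the absence of a leading agent from its root is masked by its $vprnt$ child; hence the node at which an agent has $settled=\top$ is invariant across phases $1$ and $2$, proving the second clause of the lemma. I expect the main obstacle to be precisely the bookkeeping in step (ii): verifying that the round schedule pairs every $vprnt$ move with a matching return before any phase-$1$ occupancy check, and confirming that using the frozen $v\_prnt\_port$ rather than the mutable $prnt\_port$ is exactly what keeps these hops well-defined even after an agent changes territory during phase $1$.
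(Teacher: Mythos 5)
Your proposal is correct and follows essentially the same route as the paper's own proof: the strict time-slicing (phase-$1$ writes confined to rounds $1$--$15$, phase-$2$ writes to rounds $16$--$19$), the separation of per-phase variables, and the $vprnt$ stand-in performing paired move/return hops through the frozen $v\_prnt\_port$ to mask the leading agent's absence --- your version is simply a more explicitly structured rendering of the argument the paper gives informally. One minor slip worth noting: $prm\_leader$ is not write-once as you claim (Table~\ref{tab:variables} asserts the same, but it contradicts the algorithm) --- the merging step of phase $2$ overwrites it via $r_i.prm\_leader=r_{st}.prm\_leader$ in Algorithm~\ref{alg:check} --- yet this does not damage your argument, since phase $1$ never reads $prm\_leader$ after round $1^1$, so the cross-phase non-interference conclusion still stands.
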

\begin{proof}
    The phase $1$ of the algorithm proceeds in rounds $1-15$. The agent settled at the root node of a group, namely the ``$leading$ agent'' moves from its original position to reach a new node, say $v$, only if there is a settled agent present at node $v$, otherwise, it moves back to its previous node. The $leading$ agent performs such movements in the rounds $16-19$. The primary function of these $leading$ agents is to merge different DFS trees existing in the graph. However, the agents settled at the nodes during phase $1$ do not change their position due to phase $2$. Agents maintain distinct variables for Phases $1$ and $2$, and no variable maintained for Phase $1$ is modified during Phase $2$, and vice versa. The only potential issue in Phase $1$ is that the movement of the leading agent might result in a node being misinterpreted as unsettled during dispersion. This issue is addressed by employing another settled agent $r$ with $r.vprnt=1$ that moves back and forth between its position and the node where the leading agent is settled. In each iteration, specific rounds are designated for these agents to move. Consequently, if a group of unsettled agents reaches such a node and encounters agent $r$ with $r.vprnt=1$, it understands that the current node is not vacant. Thus, phases $1$ and $2$ can be run in parallel without affecting the execution of one to the other.
    
    The agent settled at the root node of a group, namely the ``$leading$ agent'' moves from its original position to reach a new node, say $v$, only if there is a settled agent present at node $v$, otherwise, it moves back to its previous node. The $leading$ agent performs such movements in the rounds $16-19$. Whereas, the phase $1$ proceeds in the rounds $1-15$. The movement performed by the $leading$ agents leaves their originally settled nodes vacant. This may lead to misinterpretation of the node to be unsettled during dispersion (phase $1$) that happens in the rounds $1-15$. To resolve this issue, the settled agents $r$ with $r.vprnt=1$ move to and fro from their positions to their parent nodes (the nodes where these $leading$ agents were settled at) in each iteration. Thus, phases $1$ and $2$ are executed without any interruption between each other.        
\end{proof}

The following lemma proves that the agents correctly achieve dispersion after the completion of phase $1$ of the algorithm.

\begin{lemma}
  After the completion of phase $1$ of the algorithm, at each node $v$ of the graph, there is a settled agent $r$ that either has $r.vprnt=0$ or $r.vprnt=1$. The agent $r$ with $r.vprnt=1$ acts as virtually settled for its parent node.  
\end{lemma}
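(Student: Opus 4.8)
The plan is to establish two claims: first, that every node of the graph carries a settled agent once phase $1$ terminates, and second, that this settled agent is either a genuine settler (with $vprnt=0$) or an agent acting virtually for its parent (with $vprnt=1$). I would begin by invoking the correctness of the underlying dispersion routine of \cite{Sudo}: since we deploy $n$ agents on a graph with $n$ nodes, the dispersion subroutine guarantees that upon its completion exactly one agent settles at each node, so no node is left permanently empty. This is the backbone of the argument, and I would state it as the starting point.

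Next I would reconcile this with the fact that our algorithm runs phase $1$ and phase $2$ in parallel. By Lemma \ref{lem:ph2_no_interruption}, no settled agent ever permanently abandons its node on account of phase $2$: the only movements a settled agent makes outside the dispersion rounds are the controlled to-and-fro hops of a \emph{leading} agent executing phase $2$, or of its designated virtual proxy. Hence I would argue that the set of nodes occupied by settled agents at the end of phase $1$ coincides with the set guaranteed by the dispersion routine, namely all $n$ nodes. The key observation is that whenever a leading agent $r$ temporarily departs from its root node to conduct phase $2$, the node is not truly vacated, because either its first child settles an agent $r'$ with $r'.vprnt=1$ that shuttles back to cover the position (the grouped case), or, in the singleton case handled via $flag_1$, a neighbouring settled agent is converted into such a virtual proxy. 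I would trace these two cases through the round schedule (rounds $1$--$4$ and the adjustments in rounds $16$--$17$) to confirm that at every node the occupying agent satisfies $vprnt\in\{0,1\}$, and that the $vprnt=1$ agent indeed corresponds to the parent node of the departed leading agent.

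The main obstacle I anticipate is making precise the \emph{timing}: because the two phases interleave and the leading agents move only in dedicated rounds, I must verify that at the moment phase $1$ is declared complete, no node is momentarily uncovered. In particular I would need to check that the virtual proxy's hops (e.g. leaving via $v\_prnt\_port$ in one round and returning via $prt\_ent$ in the next) are synchronized so that whenever the dispersion logic inspects a node in rounds $1$--$15$, the proxy is present there, and conversely the leading agent's absence never coincides with a dispersion inspection. I would handle this by referencing the round-allocation table (Table \ref{table:slot_func}) and Lemma \ref{lem:ph2_no_interruption} to assert that the covering invariant holds in every round where it is needed. Finally, I would close by noting that since dispersion places one agent per node and every such agent either stayed put ($vprnt=0$) or became a virtual proxy for a phase-$2$ leading agent's root ($vprnt=1$), the claimed characterization holds at every node $v$, completing the proof.
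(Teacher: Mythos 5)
Your proposal is correct and follows essentially the same route as the paper's own proof: invoke Lemma \ref{lem:ph2_no_interruption} to conclude that phase $2$ does not disturb the dispersion, appeal to the analysis of \cite{Sudo} for phase-$1$ correctness, and handle the temporarily vacated root nodes via the to-and-fro movement of the $vprnt=1$ agents in their dedicated rounds. Your additional attention to the singleton ($flag_1$) case and to the round-level synchronization is a slightly finer-grained account than the paper gives, but it is the same argument.
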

\begin{proof}
   The lemma \ref{lem:ph2_no_interruption} shows that the phase $2$ of the algorithm does not affect the dispersion of the agents. Thus, the correctness of the phase $1$, i.e., the dispersion phase holds directly from the analysis given in \cite{Sudo}. However, the only issue that may arise is due to the $leading$ agents that leave their settled positions and traverse through the graph for the execution of phase $2$. For such $leading$ agents, settled agents $r$ with $r.vprnt=1$ move to and fro between their parent node and their original node. For the movement of these agents, certain rounds are allotted as mentioned in the section \ref{sec:algorithm}. Thus, during dispersion, if the group of unsettled agents reaches a node, then the agents perform their computations at a node in a specific round that is after the rounds for the movement of the agents with $vprnt=1$. Thus, the group of unsettled agents can never misinterpret a settled node to be vacant. Finally, after the completion of phase $1$, either a settled agent has $r.vprnt=0$ (that is settled at its original node) or a settled agent has $r.vprnt=1$ (that is settled at its node and is working as virtually settled for its parent as well).   
\end{proof}

\begin{lemma}\label{lem:leading_agents}
    By the end of phase $2$ of the algorithm, there is exactly one DFS tree present in the graph.
\end{lemma}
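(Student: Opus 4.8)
The plan is to prove that exactly one DFS tree survives at the end of phase $2$ by tracking the $prm\_leader$ values, which are fixed once and for all in round $1^1$ (the first iteration). First I would observe that each initial group of agents elects a unique leader whose ID is stored in $prm\_leader$, so the set of distinct $prm\_leader$ values is in bijection with the initial groups, and these values never change thereafter (per the variable table, $prm\_leader$ is set once and never updated). Let $L_{\max}$ denote the agent whose $prm\_leader$ value is the global maximum among all groups. The goal is to show that the DFS tree rooted at $L_{\max}$'s root node eventually covers every node of the graph, and that every other $leading$ agent halts its phase-$2$ traversal, leaving no competing tree.

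The key structural claim is a \emph{progress-and-absorption} argument for $L_{\max}$. I would argue in two directions. \textbf{(Absorption / no competitor survives.)} Consider any $leading$ agent $r_{st}$ with $r_{st}.prm\_leader < L_{\max}.prm\_leader$. By the rules in function \texttt{Check} (Algorithm~\ref{alg:check}), whenever $r_{st}$ in the $explore$ state reaches a settled agent $r_i$ with $r_i.prm\_leader > r_{st}.prm\_leader$, it moves back and waits permanently at its previous node; conversely, whenever $L_{\max}$ reaches a settled agent with a smaller $prm\_leader$, it overwrites that agent's $prm\_leader$ and $prm\_parent$, absorbing it into $L_{\max}$'s tree. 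Since $L_{\max}$ has the strictly largest value, it can never be the one that halts — it is never ``stopped'' by a stronger settler — so its traversal is driven purely by the DFS bookkeeping via $met\_prm$, $prm\_parent$, and the $state\in\{explore,backtrack\}$ transitions. \textbf{(Coverage.)} I would then show that $L_{\max}$'s traversal is a genuine DFS on the connected graph $G$: the $explore/backtrack$ logic in Algorithm~\ref{alg:check}, together with the $met\_prm$ marking (which prevents revisiting an already-absorbed node in $explore$) and the $prm\_parent$ pointer (which dictates backtracking), implements standard DFS. Because $G$ is connected and the waiting mechanism guarantees $L_{\max}$ never advances into a genuinely empty node (by the observation that an empty neighbor means phase $1$ is unfinished, so it waits, as argued in the high-level description and guaranteed by the to-and-fro $vprnt$ agents of the previous two lemmas), every node reachable from the root — i.e. all of $V$ — is eventually visited and relabeled with $L_{\max}.prm\_leader$.

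I would complete the argument by checking termination and uniqueness together. Once $L_{\max}$ returns to its root with no unexplored port remaining (state $backtrack$ and $prt\_ent$ cycling back to $prm\_parent$ at the root), every node carries $prm\_leader = L_{\max}.prm\_leader$, and the tree edges recorded via each settled agent's $prm\_parent$ form a single spanning tree; the non-tree edges are exactly the edges $L_{\max}$ traversed to an already-$met$ node. Any other $leading$ agent has by then been halted (it either waited out in the absorption step, or was itself absorbed and its root relabeled, in which case it is no longer a distinct $leading$ agent), so no second tree exists. I would also invoke the handling of the singleton-group case ($flag_1=true$) and the $vprnt$ virtual-settlement rounds (Lemma~\ref{lem:ph2_no_interruption} and the previous lemma) to confirm that $L_{\max}$ never misreads an occupied node as empty during this sweep.

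The main obstacle I expect is establishing that $L_{\max}$'s traversal truly terminates as a finite, well-formed DFS rather than cycling or deadlocking — specifically, proving that the interaction between the \emph{waiting} (when a neighbor is not yet settled in phase $1$) and the DFS \emph{backtrack} logic cannot cause $L_{\max}$ to loop indefinitely or to abandon an unexplored branch. One must argue that each wait is bounded (it resolves as soon as phase $1$ settles that neighbor, which happens within the $O(n\log^2 n)$ dispersion runtime), and that the $met\_prm$/$prm\_parent$ invariants are maintained consistently even while phase $1$ is concurrently relabeling $leader$ and $lvl$ on the same settled agents — here the crucial point is that phase $2$ reads and writes only the disjoint variable set $\{prm\_leader, met\_prm, prm\_parent\}$, so by Lemma~\ref{lem:ph2_no_interruption} the DFS invariants are insulated from phase $1$'s activity. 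Making the DFS-correctness invariant precise (a node is in $explore$-reachable state iff it is on the current search path, and $met\_prm$ is set iff already discovered) and showing it is preserved across each of rounds $16^i$--$19^i$ is the technical heart of the proof.
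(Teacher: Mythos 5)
Your proposal is correct and follows essentially the same argument as the paper's own proof: the leading agent with the maximum $prm\_leader$ is never halted (since halting only occurs upon meeting a strictly stronger settled agent), it absorbs every weaker settled agent and waiting leader into its DFS via the rules of function Check, and every other leading agent is eventually stopped, so exactly one tree remains. Your write-up is more detailed than the paper's — in particular, you explicitly flag the DFS-correctness invariants and the interaction with the phase-$1$ waiting mechanism, which the paper's proof treats only implicitly — but the underlying approach is the same.
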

\begin{proof}
    In phase $2$ of our algorithm, the $leading$ agent moves through the settled agents and checks whether the $prm\_leader$ value of the settled agent is larger or smaller than that of the $leading$ agent (refer Algorithm \ref{alg:r_st_checking}). If the settled agent has a larger value of $prm\_leader$, then the $leading$ agent makes no movement further and waits at the current node itself. However, if the value of $prm\_leader$ of the settled agent is smaller than that of $leading$ agent, then the settled agent updates its $prm\_leader$ value to be the same as that of $leading$ agent (refer to lines $2-5$ of Algorithm \ref{alg:check}). In other words, in this case, the $leading$ agent can proceed further to a new node. Thus, the $leading$ agent with the largest value of $prm\_leader$ is never stopped at any node and it completes the DFS traversal of the whole graph. Thus, each settled agent has $prm\_leader$ value same as that of the $leading$ agent with the largest value of $prm\_leader$. Hence by the end of phase $2$ of the algorithm, there is exactly one DFS tree present in the graph.
\end{proof}

\begin{observation}\label{obs:leader}
   The leading agent with the largest value of $prm\_leader$ is elected as the leader among all the agents that are initially positioned arbitrarily at the nodes of the graph. 
\end{observation}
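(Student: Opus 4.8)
The plan is to obtain this observation as an immediate consequence of Lemma \ref{lem:leading_agents} together with the fact that the $prm\_leader$ values assigned to the initial groups are pairwise distinct. First I would recall how the candidate leaders and their labels are fixed: in round $1^1$ each initial group elects its strongest leader, whose $ID$ is written into $prm\_leader$ for every member of the group and is never overwritten in any later round. Since agent identifiers are unique over $A$ and each group picks exactly one strongest member, the $prm\_leader$ values held by the distinct initial groups are pairwise distinct. Hence there is a unique maximum such value; denote the corresponding leading agent by $r_{l_{max}}$. The elected leader, once shown to exist and be unique, should turn out to be exactly $r_{l_{max}}$.

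Next I would appeal to the mechanism underlying Lemma \ref{lem:leading_agents}. Whenever $r_{l_{max}}$ reaches a settled agent, that agent's $prm\_leader$ is never strictly larger than $r_{l_{max}}.prm\_leader$, so the halting branch of Algorithm \ref{alg:check} (the case $r_i.prm\_leader>r_{st}.prm\_leader$) never fires for $r_{l_{max}}$, and it is never forced to wait. Therefore $r_{l_{max}}$ completes a full DFS traversal of the graph, returns to its own root with no unexplored port remaining, and thereby recognizes that phase $2$ is over. Every other leading agent, by contrast, eventually meets a settled agent whose $prm\_leader$ has already been raised to, or always exceeded, its own value; it is stopped and waits at that node rather than completing a traversal.

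The remaining and most delicate point is to establish that the completion event is unique, so that the leader is well-defined. I would argue that $r_{l_{max}}$ is the only agent that can return to its own root with all ports exhausted: any competing leading agent able to do so would need its $prm\_leader$ to dominate every settled agent it encounters, contradicting the uniqueness of the maximum fixed in the first step. Consequently exactly one agent reaches the completion state, namely $r_{l_{max}}$, and it is declared the leader of the $n$ agents. The only obstacle worth checking carefully is precisely this uniqueness of the completion event --- that no two distinct leading agents can both finish their DFS --- and it is resolved by the distinctness of the $prm\_leader$ labels together with the fact that a strictly stronger settled agent always halts a weaker leading agent.
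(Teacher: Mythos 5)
Your proposal is correct and follows essentially the same route as the paper: the paper states this as an immediate consequence of Lemma~\ref{lem:leading_agents} (giving no separate proof), and your argument --- distinctness of the $prm\_leader$ labels, the fact that the maximum-labelled leading agent is never halted by the comparison in Algorithm~\ref{alg:check} while every weaker leading agent eventually is, hence uniqueness of the agent that completes its DFS --- is exactly the reasoning used in the paper's proof of that lemma. The extra care you take with the uniqueness of the completion event is a sound elaboration rather than a departure, since any full traversal of the connected graph must visit the strongest group's territory and thereby halt a weaker leader.
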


\begin{lemma}
    The execution of phase $3$ begins only after phase $2$ of the algorithm completes.
\end{lemma}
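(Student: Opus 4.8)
The plan is to show that every action belonging to phase $3$ is gated by a single flag, and that this flag can be raised only at the precise moment phase $2$ terminates. First I would observe from Table~\ref{tab:variables} that the variable $a.ph$ is initialised to $0$ for every agent and is set to $3$ only when a settled agent \emph{enters} phase $3$. Inspecting the round descriptions and the pseudocode (Algorithms~\ref{alg:settled_20_23}, \ref{alg: movement_r_st}, and \ref{alg: settled_ph3}), every computation performed in rounds $20$--$24$ is conditioned on the presence of an agent $r_{st}$ with $r_{st}.ph=3$; if no such agent is present at a node, the settled agent there performs no phase-$3$ step in these rounds. Hence it suffices to prove that no agent sets $ph=3$ before phase $2$ is over.

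Next I would pin down the unique place where $ph$ is raised to $3$. As described in the round-$24$ logic, the only agent that performs this update is the $leading$ agent that has returned to its root node with no further ports left to explore; upon this event it sets $r_{st}.ph=3$ and $r_{st}.state=explore$, thereby initiating phase $3$. By Lemma~\ref{lem:leading_agents}, the $leading$ agent with the largest $prm\_leader$ value is never stopped at any node and is the one that completes a full DFS traversal of the graph, producing a single DFS tree; every other $leading$ agent is eventually halted when it meets a settled agent with a larger $prm\_leader$, and therefore never returns to its own root with all ports exhausted. Consequently the trigger for $ph=3$ fires exactly once, and precisely at the completion of the merging carried out in phase $2$.

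Combining these two observations yields the claim: before the designated $leading$ agent finishes its DFS traversal---the event that, by Lemma~\ref{lem:leading_agents}, marks the end of phase $2$---no agent has $ph=3$, so all conditional guards in rounds $20$--$24$ evaluate to false and no phase-$3$ step is executed; only after the traversal completes does some $r_{st}$ acquire $ph=3$ and begin phase $3$.

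The step I expect to be the main obstacle is arguing rigorously that the designated $leading$ agent can \emph{detect} that it has returned to its root with every incident port explored, and that this detection cannot occur spuriously while dispersion (phase $1$) is still in progress. Here I would lean on the waiting mechanism built into rounds $16$--$19$: whenever the $leading$ agent reaches a node that still appears empty, it interprets this as phase $1$ being incomplete and waits rather than treating the port as finished, so a premature ``all ports explored at the root'' conclusion cannot be reached. This guarantees that the $ph=3$ trigger reflects genuine completion of both phases $1$ and $2$, which closes the argument.
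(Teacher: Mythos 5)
Your proposal is correct and takes essentially the same approach as the paper: both arguments observe that every phase-$3$ action in rounds $20$--$24$ is guarded by the presence of an agent $r_{st}$ with $r_{st}.ph=3$, and that this flag is set only when the leading agent has completed its phase-$2$ DFS traversal and returned to its root. Your extra appeal to Lemma~\ref{lem:leading_agents} to show the trigger fires uniquely (and not spuriously during phase $1$) simply makes explicit what the paper's briefer proof leaves implicit.
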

\begin{proof}
    The changes according to phase $3$ of the algorithm are made only after the agent $r_{st}$ updates $r_{st}.ph=3$. In other words, once the agent $r{st}$ has completed the execution of phase $2$ of the algorithm and has returned to its root node, it sets $r_{st}.ph=3$. Further, in the rounds $20-21$, the agents settled at an even distance from the root node in the DFS tree move to their parent node where $r$ is settled. The agent $r$ updates $r.crnt\_port$ only if $r.ph=3$ and $r_{st}$ is present with $r$. Thus, the changes according to phase $3$ occur only after phase $2$ is completed. 
\end{proof}

\begin{lemma}
    After the completion of phase $3$ of the algorithm, no two agents settle at neighboring nodes. Also, a maximal independent set of the graph is formed.
\end{lemma}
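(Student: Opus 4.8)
The plan is to define $S$ as the set of nodes whose resident settled agent eventually \emph{stays}, i.e.\ sets $final\_set=1$ at its own node (the final \emph{else} branch of Algorithm~\ref{alg: settled_ph3}), and to prove that $S$ is a maximal independent set that coincides with the set of occupied nodes in the final configuration. First I would fix the order in which agents decide. By Lemma~\ref{lem:leading_agents} phase~$3$ starts with a single DFS tree spanning all $n$ nodes, and by the preceding dispersion lemma each node carries exactly one settled agent; since $r_{st}$ visits these nodes one at a time, the events ``node $v$ sets $decision=1$'' occur at pairwise distinct rounds and so induce a total order $\prec_d$ on the nodes (a post-order of the DFS tree). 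Each node decides exactly once. I would then record the monotonicity invariant: once an agent sets $final\_set=1$ it never clears it and never moves again (a stayer remains at its node; a joiner migrates once to its chosen neighbour and stops), so the collection of nodes currently hosting a $final\_set=1$ agent only grows over time.

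The core step is the invariant: \emph{every node that ever hosts a $final\_set=1$ agent lies in $S$}. I would prove this by induction along $\prec_d$. The first node to decide sees no neighbour with $final\_set=1$ (none exist yet), so by Algorithm~\ref{alg: settled_ph3} its agent stays and the node joins $S$. When a later node $w$ decides, either its agent stays (then $w\in S$), or it observes some neighbour $x$ with $final\_set=1$, sets $final\_port$ toward $x$, and migrates there; but $x$ hosted a $final\_set=1$ agent strictly before $w$'s decision, so by the induction hypothesis $x\in S$, and the migrating agent now also resides on the $S$-node $x$. Hence all $final\_set=1$ agents sit on nodes of $S$, and in particular the occupied nodes of the final configuration are exactly the nodes of $S$.

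Independence of $S$ then follows from the total order. If $u,v\in S$ were adjacent with $v\prec_d u$, then at the moment $u$ decides the stayer at $v$ already has $final\_set=1$ and, by monotonicity, is still at $v$; probing its incident ports, $u$ would detect this neighbour, set $final\_port\neq-1$ and migrate, contradicting $u\in S$. For maximality (domination) I would take any node $w\notin S$: its agent did not stay, so during its probe it set $final\_port\neq-1$, i.e.\ it saw a neighbour carrying $final\_set=1$; by the invariant that neighbour belongs to $S$, so $w$ is adjacent to $S$. Combining independence with domination shows $S$ is a maximal independent set, and since the occupied nodes are precisely $S$, no two settled agents occupy adjacent nodes, which is exactly condition~(i) of D-2-D; condition~(ii) holds because a migrating agent co-locates with a stayer only after failing to find any independent position.

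The main obstacle I anticipate is the bookkeeping around timing and positions during a probe: I must argue that when a deciding agent steps to a neighbour it reads the \emph{current} value of $final\_set$, that stayers genuinely remain in place so later neighbours can observe them, and that a migrated agent vacates its original node so it is never miscounted as an $S$-node by a subsequent decider. These are precisely the points that the monotonicity invariant and the single-decider-at-a-time structure of $r_{st}$'s traversal are designed to settle; once they are pinned down, the independence and domination arguments are short.
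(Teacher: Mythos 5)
Your proof is correct and takes essentially the same route as the paper: both argue that $r_{st}$'s DFS traversal forces the settled agents to decide one at a time, each joining the independent set if and only if no neighbour has already joined, which makes the greedy sequential construction an MIS and yields both D-2-D conditions. Your write-up is in fact more careful than the paper's brief argument --- in particular, your invariant that every $final\_set=1$ agent resides on a node of $S$ explicitly handles the subtlety that migrated agents could otherwise be mistaken for members of the independent set, a point the paper leaves implicit.
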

\begin{proof}
The agent $r_{st}$ moves through the tree edges formed in phase $2$ one by one (Algorithm \ref{alg: movement_r_st}). Let the agent $r_{st}$ reach a node $u$ where an agent $r_u$ is settled. The agent $r_u$ sets its $r_u.decision=1$ only if $r_u$ has no child (constructed in the DFS tree during phase $2$) whose decision is yet to be taken or $r_{st}$ wants to backtrack from the current node (lines $2$-$3$ and $10$ of Algorithm \ref{alg: movement_r_st}). When $r_u.decision$ is set to $1$, it moves through all the neighbors of $u$ to check if there is an agent that is already selected in the maximal independent set i.e. if there is an agent $r_i$ that has $r_i.final\_set=1$ (lines $2$-$13$ of Algorithm \ref{alg: settled_ph3}), $r_u$ is not included in the solution. Otherwise, it is included in the maximal independent set. Hence, the lemma follows. 
\end{proof}

\begin{lemma}
    The phase $3$ of the algorithm terminates correctly.
\end{lemma}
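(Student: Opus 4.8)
The plan is to prove termination in two layers: first that the traversing agent $r_{st}$ finishes its walk over the single tree $T$ built in phase $2$ (Lemma~\ref{lem:leading_agents}), and then that every settled agent eventually sets $final\_set=1$ and halts. I would treat the round-$24$ slot of each iteration as one active step of phase $3$, so that the wait of $48\delta(v)=2\cdot 24\cdot\delta(v)$ rounds prescribed in Algorithm~\ref{alg: movement_r_st} is exactly the $2\delta(v)$ active steps a settled agent needs in Algorithm~\ref{alg: settled_ph3} to probe each of its $\delta(v)$ ports once (one step out, one step back) and return. This synchronisation is what guarantees that $r_{st}$ never advances before the decision and probing at the current node are complete, keeping the counter bookkeeping consistent across agents.

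First I would show that $r_{st}$ halts. Its motion in Algorithm~\ref{alg: movement_r_st} is a depth-first traversal of $T$ steered by the $crnt\_port$ pointers: in state $explore$ it descends through the child port advertised by the resident settled agent, and upon meeting an agent reporting $crnt\_port=\bot$ it waits the fixed number of rounds and switches to $backtrack$, returning through $parent$. The key invariant to establish is that the $crnt\_port$ stored at a node advances monotonically through that node's child ports as $r_{st}$ explores the corresponding subtrees, so it equals $\bot$ exactly once every subtree below the node has been fully processed. Since $T$ is a fixed tree on $n$ nodes, the traversal then crosses each tree edge at most twice and spends only a bounded wait at each node, so $r_{st}$ returns to the root after finitely many steps. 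The corner case $crnt\_port=\bot$ at the root, where there is no $parent$ to back through, is precisely the signal that the whole traversal is finished, and $r_{st}$ terminates there.

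Because the DFS visits every node of $T$, each settled agent $r_j$ is eventually co-located with $r_{st}$ while holding $crnt\_port=\bot$, and at that unique moment it sets $decision=1$, probes all its ports, and either relocates onto an already-selected neighbour (setting $final\_set=1$ and terminating at once) or stays in place with $final\_set=1$. The step I expect to be the main obstacle is proving that the counter $count$ of every agent that remains in the independent set drains to exactly $0$, which is its halting trigger in the last block of Algorithm~\ref{alg: settled_ph3}. I would argue this by a charging argument on a staying agent $r_j$. At the instant $r_j$ fixes $decision=1$, its counter equals the number of neighbours still holding $final\_set=0$; each such neighbour has not yet decided, hence decides strictly later in the total order of decision times induced by the single traverser, and when it later runs its own probing with $decision=1$ it visits $r_j$ exactly once, triggering exactly one decrement of $r_j.count$. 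Conversely, a neighbour that already had $final\_set=1$ when $r_j$ decided had probed $r_j$ earlier, while $r_j$ still held $final\_set=0$ and the decrement block therefore did not fire, and it never probes $r_j$ again; so it causes no decrement. This gives a bijection between the units initially counted at $r_j$ and the decrements it subsequently receives, so $count$ reaches $0$ precisely after the last undecided neighbour has decided, and $r_j$ halts.

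Finally I would assemble the pieces: the DFS is finite, so after finitely many iterations every node's agent has decided; every relocating agent has already halted; every staying agent halts once its finite counter empties; and $r_{st}$ halts on completing the root. The one point demanding care in the bookkeeping is that when a relocating agent leaves its original node, a later-deciding neighbour probing that node finds no resident agent there and therefore neither increments its own counter nor owes a decrement to it; I would check that this keeps each counter's increments and decrements in balance and so does not obstruct termination, the full placement correctness having already been settled in the preceding lemma.
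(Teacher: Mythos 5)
Your proof is correct and follows essentially the same route as the paper's: the DFS traversal by $r_{st}$ forces decisions to be taken one node at a time, and each settled agent's $count$ is drained to zero by the later-deciding neighbours that probe it with $decision=1$, at which point it halts. You are more explicit than the paper on two points it leaves implicit---the termination of $r_{st}$'s tree walk via the $crnt\_port$ invariant, and the synchronisation of the $48\delta(v)$-round wait with the probing agent's $2\delta(v)$ active steps---but the core counting argument is the same.
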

\begin{proof}
    According to the algorithm for a settled agent $r_j$ in phase $3$, if $r_j.decision=1$, it traverses through its ports one by one to check the status of its neighbors. The settled agents whose decision is yet to be taken, have $final\_set=0$, which represents that when they have $decision=1$, they will definitely traverse through the edge connecting $r_j$ and itself. Thus $r_j$ increments its value of $r_j.count$ by $1$ for such a settled agent in its neighborhood. Once the decision for $r_j$ is taken, it waits for its $count$ value to be $0$. As and when a settled agent $r_v$ with $r_v.decision=1$ visits $r_j$, $r_j$ decrements its value of $count$ by $1$. Thus, when the decision for each agent in the neighborhood of $r_j$ is taken, i.e., $r_j.count=0$, the agent $r_j$ terminates. The decision for each agent is taken one by one, thus each agent terminates when its value of $count$ becomes $0$. Finally, the agents settled at the root node traverse through its ports, and finally settle through their value of $r_{min}.final\_port$ and terminate. Thus, the phase $3$ of the algorithm terminates.
\end{proof}

\begin{theorem}\label{thm:d2d_main}
    Let $G$ be a port-labeled, undirected, and connected graph with $n$ nodes, $m$ edges, and maximum degree $\Delta$. Let $k(=n)$ agents be arbitrarily placed at it at $p$ multiplicity nodes and the agents have no prior knowledge of any of the global parameters $n$, $m$, $k$, $p$, and $\Delta$. The algorithm solves D-2-D with termination in $O(max\{n\log^2n, m\})$ rounds, and $O(\log n)$ memory is required by each agent to run the algorithm.
\end{theorem}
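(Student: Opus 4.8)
The plan is to assemble the theorem from the chain of lemmas already established, and then carry out the time and memory accounting separately. Correctness and termination follow almost directly: Lemma~\ref{lem:ph2_no_interruption} guarantees that running Phases~1 and~2 in the interleaved slots of each iteration causes no interference; the lemma certifying dispersion after Phase~1 ensures that a valid dispersed configuration (with the virtually-settled bookkeeping via $vprnt$) is reached; Lemma~\ref{lem:leading_agents} shows that Phase~2 collapses the possibly many DFS trees into a single one rooted at the leading agent with the maximum $prm\_leader$; and the remaining lemmas establish that Phase~3 correctly transforms this single tree into a D-2-D (equivalently, a maximal independent set) configuration and that every agent terminates. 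Hence the substance of the theorem lies in bounding the round count and the per-agent memory.

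For the time complexity I would bound each phase in units of iterations and then multiply by the constant $24$ rounds per iteration. Phase~1 is exactly the dispersion routine of~\cite{Sudo} for an arbitrary configuration, which finishes in $O(n\log^2 n)$ iterations. Phase~2 runs concurrently inside rounds $16$--$19$; the crucial observation is that whenever the maximum leading agent $r_{l_{max}}$ stalls on an empty neighbor it does so precisely because Phase~1 has not yet settled that node, so all such waiting is absorbed into the $O(n\log^2 n)$ Phase~1 budget, and once dispersion is complete $r_{l_{max}}$ performs a plain DFS over the settled nodes, costing $O(m)$ additional iterations. Phase~3 begins only after Phases~1 and~2 finish; there the leading agent walks the $n-1$ tree edges while, at each settled node $v$ whose decision is being taken, it waits $\Theta(\delta(v))$ rounds for that agent to inspect its neighborhood. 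Summing the waits gives $\sum_{v}\delta(v)=2m$, so Phase~3 also contributes $O(m)$ iterations. Adding the three contributions yields $O(n\log^2 n)+O(m)=O(\max\{n\log^2 n, m\})$ iterations, hence the same bound in rounds.

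The hardest part will be making the Phase~2 charging rigorous: I must verify that the ``wait until a settled agent appears'' behavior never forces $r_{l_{max}}$ to idle beyond the moment Phase~1 settles the node in question, and that after dispersion the merging walk visits each edge only a bounded number of times despite the label-update interactions in the Check routine (Algorithm~\ref{alg:check}). Concretely I would argue that each directed tree edge is used at most twice (once in an $explore$ and once in a $backtrack$ move) and each non-tree edge at most a constant number of times, so the post-dispersion portion of Phase~2 is genuinely $O(m)$; the to-and-fro movements of agents with $vprnt=1$ add only a constant factor per iteration and do not change the asymptotics. For memory, I would enumerate the variables of Table~\ref{tab:variables}: each stores either an agent identifier (in $[1,n^c]$, i.e.\ $O(\log n)$ bits), a port number (at most $\delta(v)-1<n$, i.e.\ $O(\log n)$ bits), a level, a bounded-range counter, or a constant-size flag, and there are only constantly many of them, so each agent uses $O(\log n)$ bits. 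Combining the correctness and termination, the $O(\max\{n\log^2 n, m\})$ round bound, and the $O(\log n)$ memory bound completes the proof.
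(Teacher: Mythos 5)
Your proposal is correct and follows essentially the same route as the paper's own proof: correctness and termination are delegated to the preceding lemmas, Phase~1 is charged to the $O(n\log^2 n)$ bound of the dispersion routine of~\cite{Sudo} (with the leading agent's waiting absorbed into that budget), Phases~2 and~3 are each bounded by $O(m)$ edge traversals plus neighborhood-inspection waits, and memory is bounded by enumerating the $O(1)$ variables of Table~\ref{tab:variables}, each of size $O(\log n)$ since $\Delta<n$. If anything, your plan is slightly more careful than the paper, which asserts the post-dispersion $O(m)$ bound for Phase~2 and the $2n+m$ bound for Phase~3 without the explicit per-edge charging you flag as the step needing rigor.
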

\begin{proof}
    As per the analysis mentioned in \cite{Sudo}, the dispersion requires $O(k(\log k)\cdot(\log min(k, \Delta))=O(k\log^2 k)$ time for execution. Thus, our phase $1$ of the algorithm completes in $O(n\log^2 n)$ time. The phase $2$ of the algorithm is run by the $leading$ agents. A $leading$ agent moves from a node $u$ to a node $v$ only if there is a settled agent present at node $v$. Thus, after phase $1$ is completed, the $leading$ agent with the largest value of $prm\_leader$ may take at most $m$ rounds to traverse through the entire graph and settle the $leading$ agents that could not complete their DFS traversal in phase $2$ at nodes for which the agent with $vprnt=1$ is acting as virtually settled. Finally, the $leading$ agent $r_{st}$ with the largest value of $prm\_leader$ initialises the phase $3$ of the algorithm by setting $r_{st}.ph=3$. It traverses through the tree edges of the DFS tree constructed in phase $2$ of the algorithm. Consequently, each edge is traversed twice. For a node $v$, the settled agent $r_v$ at $v$, whose decision is to be made, a total of $2\delta(v)$ rounds are required. Thus, at most $2n+m$ rounds are required to execute phase $3$ of the algorithm. Hence, our algorithm solves D-2-D with termination in $O(max\{n\log^2n, m\})$ rounds without any prior knowledge of any of the global parameters.      
    
    As described in the table \ref{tab:variables}, the variables $r_i.settled$, $r_i.state$, $r_i.ph$, $r_i.decision$, $r_i.dist$, $r_i.vprnt$, $r_i.sticky$, and $r_i.final\_set$ require $2$ bits of memory. The variables $r_i.recent$, $r_i.crnt\_port$, $r_i.prnt\_prt$, $r_i.prt\_ent$, $r_i.nxt\_prt$, $r_i.prm\_parent$, $r_i.final\_port$, $r_i.count$, $r_i.next$, and $r_i.v\_prnt\_port$ require $\log \Delta$ memory by the agents. The memory required by the agents to store the variable $r_i.ID$, $r_i.leader$, and $r_i.prm\_leader$ is $O(\log n)$. As $\Delta< n$, the algorithm can be executed with $O(\log n)$ bits of memory per agent.   
\end{proof}

\begin{corollary}
   Let $G$ be a port-labeled, undirected, and connected graph with $n$ nodes, $m$ edges, and maximum degree $\Delta$. Let $n$ agents be arbitrarily placed at the nodes of $G$ and have no prior knowledge of any of the global parameters. Our algorithm elects an agent as the leader in $O(max\{n\log^2n, m\})$ rounds, and $O(\log n)$ memory is required by each agent to run the algorithm.
\end{corollary}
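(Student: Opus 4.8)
The plan is to obtain this corollary as an almost immediate consequence of Theorem~\ref{thm:d2d_main} together with Lemma~\ref{lem:leading_agents} and Observation~\ref{obs:leader}, since the leader is elected as a byproduct of the very same three-phase execution rather than by a separate procedure. First I would argue that the notion of ``elected leader'' is well defined here, namely that exactly one agent ends up designated as the global leader and that this agent recognizes itself as such. By Lemma~\ref{lem:leading_agents}, at the end of phase~2 there is exactly one surviving DFS tree, and by construction this tree belongs to the leading agent whose $prm\_leader$ value is maximal: every other leading agent halts its phase~2 traversal the moment it meets a settled agent carrying a strictly larger $prm\_leader$ (the second case of Algorithm~\ref{alg:check}), so no competing agent can complete a full traversal. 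The unique maximal-$prm\_leader$ leading agent, call it $r_{st}$, detects completion when it returns to its root with no unexplored port remaining, and it is exactly this self-detection that I would take as the moment of election; all other agents, having stalled, are thereby non-leaders. This is precisely the statement already recorded in Observation~\ref{obs:leader}.

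Next I would pin down the round complexity. Since the leader is fixed precisely when phase~2 finishes, it suffices to bound the end of phase~2, and for this I would reuse the timing analysis inside the proof of Theorem~\ref{thm:d2d_main}. Phase~1 (the dispersion of Sudo et al.) terminates within $O(n\log^2 n)$ rounds, and once phase~1 has completed the maximal leading agent needs at most $O(m)$ further rounds to traverse the whole graph and absorb the stalled leaders at the nodes held virtually settled for it. Because each $24$-round iteration dedicates only a constant number of rounds ($16$--$19$) to phase~2 movement, a DFS step advances per iteration, and the interleaving inflates the count by at most a constant factor; this yields an election time of $O(\max\{n\log^2 n, m\})$ rounds, matching the claimed bound.

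For the memory I would observe that leader election introduces no variable beyond those already tabulated for the D-2-D algorithm; the decisive field is $prm\_leader$, which stores an identifier from $[1,n^{c}]$ and hence occupies $O(\log n)$ bits, while all remaining auxiliary fields were already shown in Theorem~\ref{thm:d2d_main} to fit within $O(\log n)$ bits, so the memory bound carries over verbatim. The only genuinely delicate point, and the step I would treat most carefully, is the well-definedness argument of the first paragraph: I must rule out the degenerate scenario in which the maximal-$prm\_leader$ agent is one that started \emph{alone} at its node (so that $flag_1=true$ and no companion is available to virtually occupy its root while it is away), and confirm that the special single-agent handling described in the round $16$--$19$ block (Algorithm~\ref{alg:r_st_checking}) still lets this agent re-settle at its root and recognize completion. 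Verifying that uniqueness and self-recognition survive this corner case is where I expect the real work to lie, the rest being a direct appeal to the earlier results.
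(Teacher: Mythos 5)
Your proposal is correct and follows essentially the same route as the paper: the paper's proof likewise cites Observation~\ref{obs:leader} for the fact that the maximal-$prm\_leader$ leading agent is the elected leader at the end of phase~2, bounds the election time by the phase-1/phase-2 timing analysis inside Theorem~\ref{thm:d2d_main}, and inherits the $O(\log n)$ memory bound from that theorem. Your additional unpacking of Lemma~\ref{lem:leading_agents} and the $flag_1$ corner case is more detail than the paper provides, but it is elaboration of the same argument rather than a different approach.
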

\begin{proof}
    From Observation \ref{obs:leader}, it is clear that by the end of phase $2$, the agent with the largest value of $prm\_leader$ is elected as the leader from $n$ agents. From Theorem \ref{thm:d2d_main}, it is clear that D-2-D with termination is achieved in $O(\max\{n\log^2 n, m\})$ time. Thus, phase $2$ requires at most $O(\max\{n\log^2 n, m\})$ rounds for completion. The memory requirement of $O(\log n)$ with each agent for leader election holds true from Theorem \ref{thm:d2d_main}.
\end{proof}

\section{Conclusion}
In this work we provide an improved solution for D-2-D starting from any arbitrary initial configuration, using $n$ mobile agents, where $n$ is the number of nodes in the graph. It would be interesting to solve D-2-D from an arbitrary initial configuration with $k<n$ mobile agents.
An $\Omega(m\Delta)$ lower bound for D-2-D problem with $k>1$ agents is proved in \cite{kaur2023}. The proof uses a two-agent scenario. It would be interesting to do a lower-bound study of D-2-D starting with $k=O(n)$ agents. Also, solving D-2-D in the presence of faults can be another direction for further study.

\section{Acknowledgements}
Tanvir Kaur acknowledges the support from the CSIR, India (Grant No. 09/1005 (0048)/2020-EMR-I). Barun Gorain and Kaushik Mondal acknowledge the support by the Core Research Grant (no.: CRG/2020/005964) of the SERB, DST, Govt. of India. Barun Gorain acknowledges the support of the Science and Engineering Research Board (SERB), Department of Science and Technology, Govt. of India (Grant Number: MTR/2021/000118).
Also, this work was partially supported by the FIST program of the Department of Science and Technology, Government of India, Reference No. SR/FST/MS-I/2018/22(C).

\bibliographystyle{unsrt}

\end{document}